\renewcommand\@formatdoi[1]{\ignorespaces}
\newcommand{\sysname}{\textsc{\small{FraudTrap}}\xspace}
\newcommand{\sysnamep}{\textsc{\small{FraudTrap+}}\xspace}
\newcommand{\sysnames}{\textsc{\small{FraudTrap*}}\xspace}
\newcommand{\sscore}{$\mathcal{S}_{ij}$ }
\newcommand{\cij}{$C_{ij}$ }
\newcommand{\seti}{$\mathbf{I}_i$ }
\newcommand{\para}[1]{\xspace \smallskip \noindent\textbf{#1} \ }
\newcommand{\edge}{$\boldmath{\varepsilon}$ }
\newcommand{\subg}{$\mathcal{G}$\xspace}
\newcommand{\obg}{$\mathbf{G}$\xspace}
\newcommand{\obj}{$\mathbf{M}$\xspace}
\newcommand{\user}{$\mathbf{N}$\xspace}
\newcommand{\fscore}{$\mathcal{F}$-score\xspace}
\newcommand{\cscore}{$\mathbf{C}$-score\xspace}
\newcounter{protocol}
\lstdefinestyle{mystyle}{
%	backgroundcolor=\color{white},   
%	commentstyle=\color{green},
	keywordstyle=\color{magenta},
%	numberstyle=\tiny\color{gray},
%	stringstyle=\color{purple},
	basicstyle=\ttfamily\footnotesize,
	breakatwhitespace=false,         
	breaklines=true,                 
	captionpos=b,                    
	keepspaces=true,                 
	numbers=left,                    
	numbersep=5pt,                  
	showspaces=false,                
	showstringspaces=false,
	showtabs=flase,                  
	tabsize=2,
%	basicstyle=\ttfamily\small
}
\author{Yikun Ban}
\affiliation{%
  \institution{Peking University}
  \streetaddress{P.O. Box 1212}
%  \city{Beijing}
 % \state{China}
  \postcode{100084}
}
\author{Jiao Sun}
\affiliation{%
	\institution{Tsinghua University}
	\streetaddress{P.O. Box 1212}
%	\city{Montreal}
%	\state{Canada}
	\postcode{}
}
\author{Xin Liu}
\affiliation{%
	\institution{Tsinghua University}
	\streetaddress{P.O. Box 1212}
%	\city{Beijing}
	%\state{China}
	\postcode{100084}
}
\author{Ling Huang}
\affiliation{%
	\institution{Fintec.ai}
	\streetaddress{P.O. Box 1212}
%	\city{Beijing}
%	\state{China}
	\postcode{100084}
}
\author{Yitao Duan}
\affiliation{%
	\institution{Netease Youdao Inc.}
	\streetaddress{P.O. Box 1212}
%	\city{Beijing}
%	\state{China}
	\postcode{100084}
}
\author{Wei Xu}
\affiliation{%
	\institution{Tsinghua University}
	\streetaddress{P.O. Box 1212}
%	\city{Beijing}
%	\state{China}
	\postcode{100084}
}
\begin{document}

\title{FraudTrap: Catching Loosely Synchronized Behavior in Face of Camouflage}

\begin{abstract}

The problem of online fraud detection can often be formulated as mining a bipartite graph of users and objects for suspicious patterns. The edges in the bipartite graph represent the interactions between users and objects (e.g., reviewing or following). However, smart fraudsters use sophisticated strategies to influence the ranking algorithms used by existing methods. Based on these considerations, we propose \sysname, a fraud detection system that addresses the problem from a new angle.  Unlike existing solutions, \sysname works on the object similarity graph (\textbf{OSG}) inferred from the original bipartite graph. The approach has several advantages. First, it effectively catches the loosely synchronized behavior in face of different types of camouflage. Second, it has two operating modes: unsupervised mode and semi-supervised mode, which are naturally incorporated when partially labeled data is available to further improve the performance. Third, all algorithms we design have the near-liner time complexities and apply on large scale real-world datasets. Aiming at each characteristics of \sysname, we design corresponding experiments that show \sysname outperforms other state-of-the-art methods on eleven real-world datasets.
\end{abstract}

\keywords{
Fraud Detection, Object Similarity Graph, Graph Partition}
\maketitle

%
% The code below should be generated by the tool at
% http://dl.acm.org/ccs.cfm
% Please copy and paste the code instead of the example below.
%

\section{Introduction}
Fraud has severely detrimental impacts on the business of social networks and other web online applications \cite{CrimeReport}. A user can become a fake celebrity by purchasing ``zombie followers'' on Twitter. A merchant can boost his reputation through fake reviews on Amazon. This phenomenon also conspicuously exists on Facebook, Yelp and TripAdvisor, etc. In all the cases, fraudsters try to manipulate the platform's ranking mechanism by faking interactions between the fake accounts they control (\emph{fraud users}) and the target customers (\emph{fraud objects}). 

 These scenarios are often formulated as a bipartite graph of objects and users. We define an object as the target a user could interact with on a platform. Depending on the application, an object can be a followee, a product or a page. An edge corresponds to the interaction from a user to the object (e.g., reviewing or following). Detecting fraud in the bipartite graph has been explored by many methods. Since fraudsters rely on fraudulent user accounts, which are often limited in number, to create fraudulent edges for fraud objects' gain \cite{CATCHSYNC}, previous methods are mainly based on two observations: (1) fraud groups tend to form dense subgraphs in the bipartite graph (high-density signal) , and/or (2) the subgraphs induced by fraud groups have unusually surprising connectivity structure (structure signal). These methods mine the bipartite graph directly for dense subgraphs or rare structure patterns. Their performances vary in real-world datasets. 

Unfortunately, smart fraudsters use more sophisticated strategies to avoid such patterns. First, by multiplexing a larger pool of fraud users, a fraudster can effectively reduce the density of the subgraph induced by a fraud group. This is called \emph{loosely} synchronized behavior and leads to the limited performance of the methods\cite{FRAUDAR, CROSSSPOT, KCORE, SPOKEN, NETPROBE} depending on the high-density signal.  Another commonly used technique is to create edges pointing to normal objects to disguise fraud users as normal ones. This strategy,  often called \emph{camouflage}, alters the connectivity structure of the bipartite graph and weakens the effectiveness of many approaches targeting such structure, such as HITS\cite{CATCHSYNC, ECommerce}, and \emph{belief propagation (BP)}\cite{FRAUDEAGLE, NETPROBE}. Fig.~\ref{fig:attack} illustrates these two strategies.

\begin{figure}
\includegraphics[width = \columnwidth]{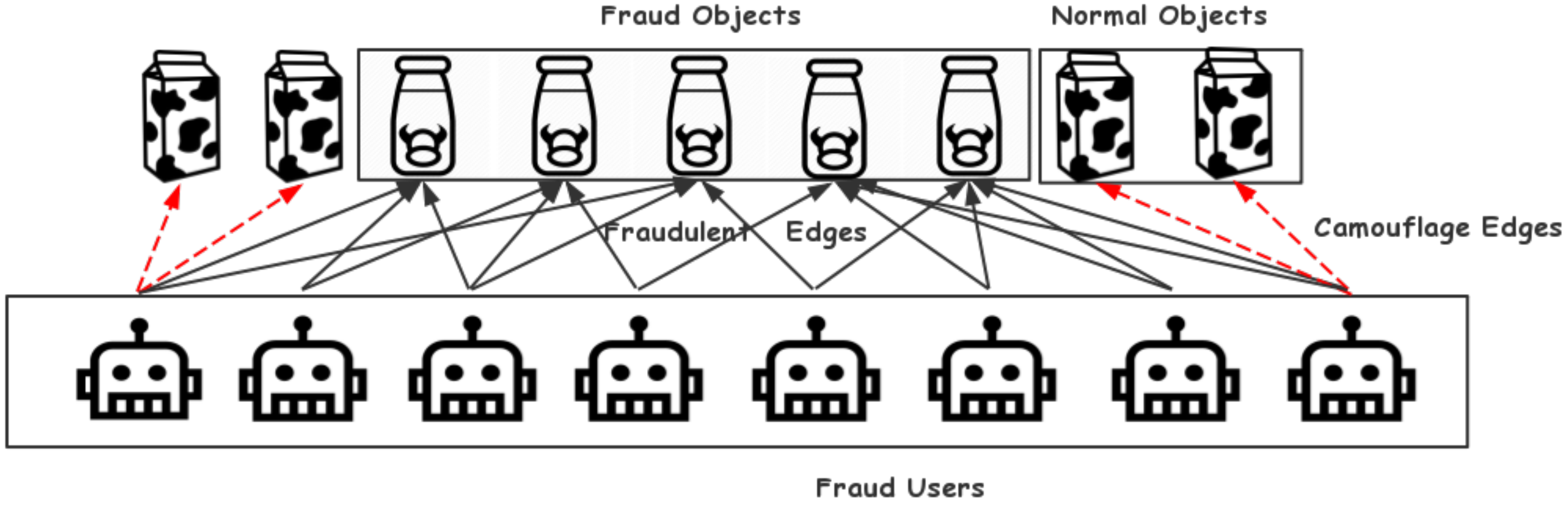}
\vspace{-2.0em}
\caption{(1) Loosely synchronized behavior: fraudsters increase the number of fraud users and multiplex them. (2) Camouflage: fraud users create edges to normal objects.} 
\label{fig:attack}
%\label{fig:synchro}
\vspace{-2em}

\end{figure}

The problem of fraud detection can also be handled using supervised or semi-supervised approaches when (partially) labeled data are available. \cite{ADOA, SYBILINFER} provide better performance using a subset of labeled frauds. 
 \cite{Abdulhayoglu2017HinDroid, 2010Uncovering, Egele2017Towards} build machine learning classifiers to detect anomalies. These approaches, however, have a number of limitations. Firstly, it is often very difficult to obtain enough labeled data in fraud detection due to the scale of the problem and the cost of investigation. Secondly, they require great effort in feature engineering which is tedious and demands high expertise level. Thirdly, they often fail to detect new fraud patterns. Finally, even though some labeled data can provide potentially valuable information for fraud detection, it is not straightforward to incorporate them into existing unsupervised or semi-supervised solutions such as \cite{ADOA, SYBILINFER}. 

In this paper, we propose \sysname, a graph-based fraud detection algorithm that overcomes these limitations with a novel change of the target of analysis. Instead of mining the bipartite graph directly, \sysname analyzes the Object Similarity Graph (\textbf{OSG}) that is derived from the original bipartite graph. There are two main advantages of our design: (1) fraud objects exhibit more similar behavior patterns since fraud objects are difficult to gain edges from normal users, which endows \sysname with inherent camouflage-resistance (Sec.~\ref{sec:analysis}); (2) since the number of objects is typically smaller than the number of users~\cite{amazon_data}, working with OSG reduces computation cost while guaranteeing the effectiveness. In addition, although \sysname works well without any labels, we can easily switch to a semi-supervised mode and improve performance with partial labels.

%The existing approach \cite{SYNCHROTRAP} works on the user similarity graph, while it is easy for smart fraudster  to manipulate fraud users into camouflage to reduce their similarity, and the approach has many key parameters to tune.  
% should talk more about this in related work

In summary, our main contributions include:

1) \textbf{[Metric $C$]}. We build Object Similarity Graph (OSG) by a novel similarity metric, $C$-score, which transforms the sparse subgraphs induced by fraud groups in the bipartite graph into the much denser subgraphs in OSG, by merging information from unlabeled and labeled(if available) data.

2) \textbf{[Algorithm LPA-TK]}. We propose a similarity-based clustering algorithm, LPA-TK, that perfectly fits in OSG and outperforms the baseline (LPA) in face of noise edges (camouflage). 

3) \textbf{[Metric $\mathcal{F}$]}. Given candidate groups returned by $C$ + LPA-TK, we propose an interpretable suspiciousness metric, $\mathcal{F}$-score, meeting the all basic ``axioms'' proposed in \cite{CROSSSPOT}.

4) \textbf{[Effectiveness]}. Our method \sysname ($C$ + LPA-TK + $\mathcal{F}$) can operate in two modes: unsupervised and semi-supervised. The unsupervised mode outperforms other state-of-the-art methods for catching synchronized behavior in face of camouflage. Semi-supervised mode naturally takes advantage of partially labeled data to further improve the performance.

\section{Related work}

To maximize their financial gains, fraudsters have to share or multiplex certain resources (e.g., phone numbers, devices).  To achieve the ``economy of scale'', fraudsters often use many fraudulent user accounts~\footnote{To be succinct, we use \emph{fraud users} to refer to these accounts. } to conduct the same fraud \cite{SYNCHROTRAP, COPYCATCH}. As a result, fraud users inevitably exhibit synchronized behavior on certain features, be it phone prefixes, or IP subnets. Group-based approaches that detect frauds by identifying such synchrony are surpassing content-based approaches (e.g., \cite{SPAMREVIEW}) as the most effective anti-fraud solutions. There are three classes of methods. 

\para{Unsupervised.} Unsupervised methods achieve various performance on fraud detection.  There are two types of unsupervised detection methods in the literature.  

The first type is based on high-density subgraphs formed by fraud groups. Mining dense subgraphs in the bipartite graph \cite{KCORE, SPOKEN, DCUBE} is effective to detect the fraud group of users and objects connected by a massive number of edges. Fraudar \cite{FRAUDAR} tries to find a subgraph with the maximal average degree using a greedy algorithm.  CrossSpot \cite{CROSSSPOT} focuses on detecting dense blocks in a multi-dimensional tensor and gives several basic axioms that a suspiciousness metric should meet.  People have also adopted singular-value decomposition (SVD) to capture abnormal dense user blocks~\cite{INFERRING,FBOX}. However, fraudsters can easily evade detection by reducing the synchrony in their actions (details in Sec.~\ref{sec:camouflage}). 

The second type is based on rare subgraph structures of fraud groups. Such structures may include the sudden creation of massive edges to an object, etc.BP \cite{FRAUDEAGLE, NETPROBE} and HITS~\cite{COMBATING, CATCHSYNC,UNDERSTAN} intend to catch such signals in the bipartite graph.  FraudEagle \cite{FRAUDEAGLE} uses the loopy belief propagation to assign labels to the nodes in the network represented by Markov Random Field (MRF).  \cite{Shah2017EdgeCentric} ranks abnormality of nodes based on the edge-attribute behavior pattern by leveraging minimum description length. \cite{Kumar2017FairJudge, Hooi2015BIRDNEST} use Bayesian approaches to address the rating-fraud problem. SynchroTrap \cite{SYNCHROTRAP} works on the user similarity graph. In all the cases, it is relatively easy for fraudsters to manipulate edges from fraud users to conceal such structural patterns (details in Sec. \ref{sec:camouflage}). The common requirement of parameter tuning is also problematic in practice, as the distribution of fraudsters changes often. 

\begin{table}
\footnotesize
\caption{\sysname v.s. existing methods}
\vspace{-0.7em}
\begin{tabular}{c|ccccccll|l}
\toprule
  & \rotatebox{90}{Fraudar\cite{FRAUDAR}} &\rotatebox{90}{Spoken \cite{SPOKEN} }& \rotatebox{90}{CopyCatch\cite{COPYCATCH} }&\rotatebox{90}{CatchSync \cite{CATCHSYNC}} &\rotatebox{90}{CrossSpot \cite{CROSSSPOT}}&  \rotatebox{90}{Fbox\cite{FBOX}} & \rotatebox{90}{FraudEagle\cite{FRAUDEAGLE} }&\rotatebox{90}{M-zoom\cite{MZOOM}} &\rotatebox{90}{\sysname} \\
\midrule
Loose synchrony?&$\times$&$\times$&$\times$&$\surd$&$\times$&$\times$&$\times$&$\times$ &$\surd$\\
%Parameter free?& $\surd$ & $\times$ & $\times$ & $\surd$ & $\times$ & $\times$ &$\surd$& $\surd$& $\surd$ \\
Camou-resistant?& $\surd$ & $\times$ & $\surd$ & $\times$ & ? & $\times$ &$\times$&$\surd$ & $\surd$ \\
Side information?& $\times$ & $\surd$ & $\surd$ & $\times$ & $\surd$ & $\times$ &$\surd$&$\surd$& $\surd$ \\
Semi-supervised?& $\times$ & $\times$ & $\times$ & $\times$ & $\times$ & $\times$ &$\times$&$\times$& $\surd$ \\
\bottomrule
\end{tabular}
\centering
\vspace{-2em}
\end{table}

\para{(Semi-)supervised.} When partially labeled data are available, semi-supervised methods can be applied to anomaly detection. The fundamental idea is to use the graph structure to propagate known information to unknown nodes.  \cite{SEMIGRAPH, SYBILBELIEF} model graphs as MRFs and label the potential suspiciousness of each node with BP. \cite{INTEGRO, KEEPFRIENDS, SYBILINFER} use the random walk to detect Sybils. ADOA\cite{ADOA} clusters observed anomalies into $k$ clusters and classifies unlabeled data into these $k$ clusters according to both the isolation degree and similarity.  When adequate labeled data are available, people have shown success with classifiers such as multi-kernel learning\cite{Abdulhayoglu2017HinDroid}, support vector machines \cite{Tang2009Machine} and $k$-nearest neighbor \cite{KNEAR}. However, it is rare to have enough fraud labels in practice.

\section{Design Considerations}\label{sec:camouflage}

We provide details why fraudsters can easily evade existing detection, and present the key ideas of \sysname design.

\subsection{How Smart Fraudsters Evade Detection? }

\para{Reducing synchrony in fraud activities.}
One of the key signals that existing fraud detection methods rely on is the high-density of a subgraph. A naive fraud campaign may reuse some of the resources such as accounts or phone numbers, resulting in high-density subgraphs. However, experience shows that fraudsters now control larger resource pools and thus can adopt smarter strategies to reduce the synchrony by rotating the fraud users each time. 
For example, \cite{CATCHSYNC} reports that on Weibo a fraud campaign uses 3 million fraud accounts, a.k.a. \emph{zombie fans},  to follow only 20 followees (fraud objects).  Each followee gains edges from a \emph{different} subset of the followers \cite{CATCHSYNC}. The \emph{edge density} (the ratio of the number of edges to the maximum number of possible edges given its nodes) of the subgraph induced by the fraud group is only  $3.3 \times 10^{-6}$, which is very close to legit value.
This strategy, as our experiments in Sec.\ref{exp:syn} will show, effectively reduces the synchrony and deceives many subgraph-density-based methods~\cite{FRAUDAR, KCORE, SPOKEN, DCUBE, CROSSSPOT}. For example, FRAUDAR~\cite{FRAUDAR}, it is susceptible to synchrony reduction (details in Sec.\ref{exp:syn}).

\para{Adding camouflage. }
Fraudsters also try to confuse the detection algorithm by creating camouflage edges to normal objects, making the fraud users behave less unusually (Fig.\ref{fig:attack} (2)). According to \cite{FRAUDAR}, there are four types of camouflages: 1) random camouflage: adding camouflage edges to normal objects randomly; 2) biased camouflage:  creating camouflage edges to normal objects with high in-degree. 3) hijacked accounts: hijacking honest accounts to add fraudulent edges to fraud objects. 4) reverse camouflage: tricking normal users to add edges to fraud objects. 

Camouflage severely affects graph-structure-based methods\cite{FRAUDEAGLE, NETPROBE, COMBATING, CATCHSYNC, UNDERSTAN}, as fraudsters can reshape the structure without many resources. For example, our experiments in Sec.\ref{exp:syn} demonstrate that the degrees and HITS scores from Catchsync~\cite{CATCHSYNC} stops working even with a moderate number of camouflage edges. 
% proposes the synchronicity and normality features based on the degree and HITS score of users for catching fraud users. that its performance deteriorates when camouflage edges are added. Some solutions such as   have some degree of . However, since they rely on detecting high-density subgraphs, they are . See our experiments in .

\subsection{Our Key Ideas }

The fundamental reason that the above two strategies succeed in deceiving existing detection methods is that they are based on analyzing the original bipartite graph.  The fraudsters can easily manipulate the graph (both the density and structure) with a large number of fraud users. Unfortunately, the current black market makes the number of fraud accounts easy to obtain.

%the  that analyze is vulnerable to the manipulation of fraudsters-controlled accounts. Spreading edges across more accounts effectively reduces the edge density, while camouflaging changes the connectivity structure. Both can be easily done if one controls a certain number of accounts.

We propose to solve the problem from a different angle. Objects that pay for fraud activities are similar because the fraudsters must use their fraud user pool to serve many objects to make profits. Thus, instead of analyzing the user-object bipartite graph directly, we work on the similarity among different objects, which we capture as an \emph{object similarity graph (OSG)} whose nodes are all objects and the edges represent the similarity among these objects. As we will show, with a carefully designed similarity score, a fraud object is much more similar with other fraud objects than normal ones and it is much harder for fraudsters to manipulate the OSG than the original bipartite graph.  This is because, in the OSG, the subgraph $\mathcal{G}$ formed by loosely synchronized behavior is much denser compared to the corresponding subgraph in the original user-object bipartite graph and the density of $\mathcal{G}$ cannot be altered by camouflage.  Figure \ref{fig:cscore} shows an intuitive example.   %Also, adding camouflage edges does not alter the object similarity, and thus the OSG is inherently camouflage-resistant.

Furthermore, we want to leverage the side information available in different applications instead of letting the algorithm limit the choices.  Specifically, we allow optionally including two types of information, (partial) fraud labels to offer a semi-supervised mode for the algorithm and side information of the activities, such as timestamp and star rating, etc.  As we will show, the similarity score we design is additive for both labels and extra dimensions, so it is easy to incorporate all available information into the uniform framework.

\begin{figure}[t]
\includegraphics[width = \columnwidth]{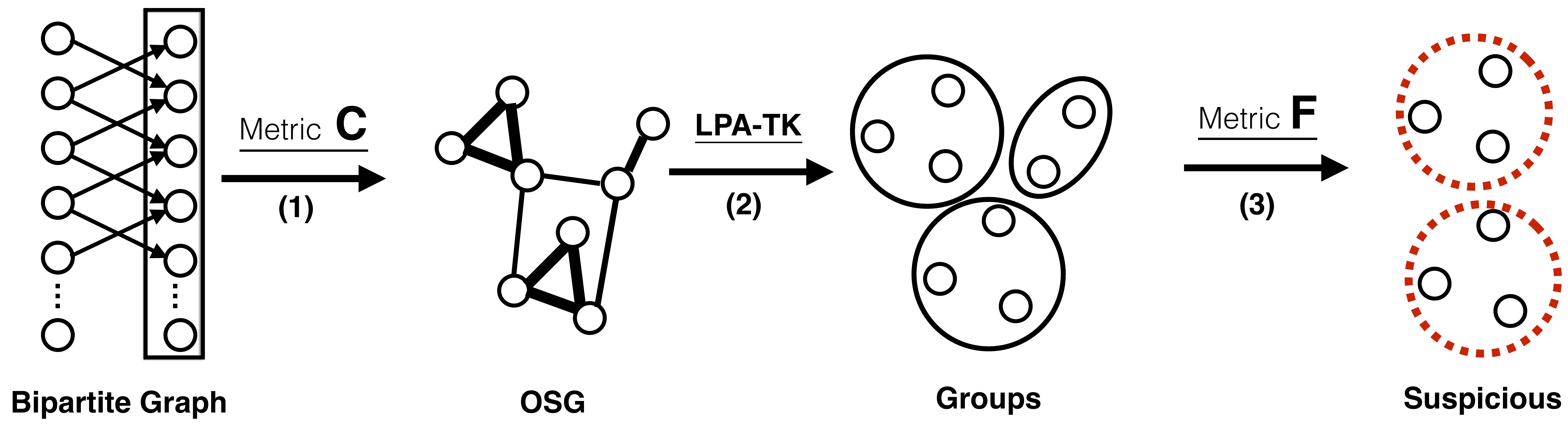}
\centering
\caption{The workflow of \sysname ($C$+LPA-TK+$\mathcal{F}$).} \label{fig:workflow}
\vspace{-2.0em}
\end{figure}

\section{Methods} \label{sec:me}

In this section, we first provide an overview of the workflow (Figure \ref{fig:workflow}), and then we detail each of the three steps of the OSG construction ($C$), clustering on OSG (LPA-TK), and spot suspicious groups ($\mathcal{F}$).  Finally, we provide the intuitions and proofs.

\para{Problem definition and workflow.}
Consider a bipartite graph $G$ of a user set \user and an object set \obj, and another bipartite graph $G^l$ formed by a subset of labeled fraud users $\mathbf{N}^l$ and the same object set \obj.  
We use an edge \edge pointing from a user to an object to represent an interaction between them, be it a follow, comment or purchase. \sysname works in three stages:

\begin{enumerate}
\item OSG construction: 
The OSG captures the object similarity, and we design a metric, \cscore, to capture the similarity between two objects based on user interactions. If $G^l$ is available, i.e., there are some labeled data, the \cscore incorporates that data too. 

\item Clustering on OSG: 
We propose a similarity-based clustering algorithm that clusters each object into a group based on its most similar neighbors on OSG.

%While it is normal to have similar objects naturally, having too many similar objects raises the alarm.  We detect dense subgraphs in OSG. Note that the subgraph in OSG is much denser than the original bipartite graph, and the density is harder for the fraudsters to manipulate.  The key in this stage is to make the detection algorithm scalable and easy to parallelize to handle many feature dimensions and data items.  

\item Spot suspicious groups: 
Given candidate groups, it is important to use an interpretable metric to capture \emph{how suspicious an object/user group is, relative to other groups}. We design the $\mathcal{F}$-score metric for the purpose.  

%Given candidate dense regions, we propose a novel suspiciousness metric to rank the priorities for inspecting. The metric is easily  and meets all reasonable axioms proposed in \cite{CROSSSPOT}. Based on inevitable  connectivity relations between fraud objects and fraud users, we obtain fraud users using a simple and effective approach.

\end{enumerate}

We elaborate these three stages in the rest of this section.

\begin{table}[t]
\caption{Symbols and Definitions}\label{tab1}
\vspace{-0.7em}
 \centering
\begin{tabular}{l|l}
\toprule
Symbols &  Definition\\
\midrule
\user &  The set of users, $\mathbf{N} = \{ n_1, ..., n_{|\mathbf{N}|} \}$  \\ 
$\mathbf{N}^l$ & The set of labeled fraud users, $\mathbf{N}^l \subset \mathbf{N}$\\
\obj &  The set of objects, $\mathbf{M} = \{m_1, ... , m_{|\mathbf{M}|}\} $  \\
$G$ & The bipartite graph, $G =(\mathbf{N} \cup \mathbf{M}, E)$\\
$G^l$ & The bipartite graph, $G^l =(\mathbf{N}^l \cup \mathbf{M}, E^l)$\\
$\varepsilon_{ij}$ & An edge, $\varepsilon_{ij} \in E/E^l$ and $\varepsilon_{ij} = (n_i, m_j), n_i \in \mathbf{N}/\mathbf{N}^l$\\  
$\mathbf{I}_i$ & The set of edges pointing to $m_i$, $\mathbf{I}_i \subset E$ \\
$\mathbf{I}_i^l$ & The set of edges pointing to $m_i$, $\mathbf{I}_i^l \subset E^l$ \\

\obg & Object Similarity Graph, $\mathbf{G} = (\mathbf{M}, \mathbf{E})$\\
$C_{ij}$ & Object Similarity Score, $C_{ij} \in \mathbf{E}$ and $C_{ij}= (m_i, m_j) $ \\
\subg & A subgraph of \obg, $\mathcal{G} = (\mathcal{M}, \mathcal{E})$\\ 
\bottomrule
\end{tabular}
\vspace{-2em}
\end{table}

\subsection{Stage I: OSG Construction ($C$) }\label{sec:con}

OSG captures the similarity between object pairs, and thus the first step is to define the similarity metric, $C$-score.  The $C$-score has two parts, similarity in $G$ (unlabeled) and in $G^l$ (labeled). Formally, we define the similarity score \cij between object $m_i$ and object $m_j$ as 

\begin{equation}\label{equ:cscore}
C_{ij} = \mathcal{S}_{ij} + \mathcal{S}^l_{ij},
\end{equation}
where \sscore is the similarity score calculated from the unlabeled $G$, while $\mathcal{S}_{ij}^l$ is obtained from the labeled $G^l$.

In $G$, let \seti $= \{\varepsilon_{ji}: n_j \in \mathbf{N}, (n_j, m_i)\in E \}$  be the set of edges pointing to $m_i$. Following the definition of the Jaccard similarity~\cite{JACCARD}, we define the similarity between $m_i$ and $ m_j$, \sscore, as

\begin{equation}\label{eq:c1}
\mathcal{S}_{ij} = \frac{\left | \mathbf{I}_i\cap \mathbf{I}_j \right |}{\left | \mathbf{I}_i \cup \mathbf{I}_j \right |}.
\end{equation}

In $G^l$, let $\mathbf{I}_i^l = \{\varepsilon_{ji}:n_j \in \mathbf{N}^l, (n_j, m_i)\in E^l \} $ represent the set of edges pointing to $m_i$. Then the similarity score $\mathcal{S}_{ij}^l$ between $m_i$ and $m_j$ is given by:
\begin{equation}\label{eq:c2}
\mathcal{S}_{ij}^l = \frac{\left |   \mathbf{I}_i^l\cap \mathbf{I}_j^l   \right |}{   mean( \mathbf{I}^l) },
\end{equation}
where  $mean( \mathbf{I}^l)$ is the mean of the set $\mathbf{I}^l = \{ |\mathbf{I}_i^l\cap \mathbf{I}_j^l|: m_i, m_j \in \mathbf{M}, |\mathbf{I}_i^l\cap \mathbf{I}_j^l| >0 \}$ .

\para{Leveraging side information.}  
If the side information describing additional properties of the user-object interaction is available, we want to include the information in the detection.  For example, \cite{COPYCATCH} reports that the time feature is essential for fraud detection.  To do so, we can augment an edge $\varepsilon_{ij}$ both in $G$ and $G^l$ using the following attribute tuple:  
\begin{displaymath}
\varepsilon_{ij} = \left ( n_i \text{,} \  Attr_1 \text{,} \  Attr_2 \cdots \right ),
\end{displaymath}
where $Attr_i$ can be a timestamp, star-rating, etc. We can append as many attributes as we need into the tuple and combine the synchronized behavior into a single score $C$. We give the following simple example.
\smallskip

\textbf{Example 1:} In a collection of reviews on Amazon, a review action $(n_i, m_j, time_1, IP_1)$ indicates that a user $n_i$ reviewed product $m_j$ at the $time_1$ on $IP_1$. Then, we use $\varepsilon_{ij}$ to denote the review action, $\varepsilon_{ij} =(n_i, time_1, IP_1)$, and we discard $m_i$ for the comparisons in Eq.\ref{eq:c1} and Eq.\ref{eq:c2}.

%%
%Consider a $K+1$ dimensional dataset in which the first dimension is the object and the other $K$ dimensions are different features. \sysname uses the $C^K$-score to capture the similarity of two objects over $K$ feature dimensions      
%\begin{displaymath}
%C_{ij}^K = \sum^{K} C_{ij}^k,
%\end{displaymath}
%where $C_{ij}^k$ is computed based on the bipartite graph formed by the object and dimension $k$, using Eq.(1). 
%
%The $C^K$ metric allows us to combine synchronized behavior on all feature subsets into a single number. Note that we can append two features into the edge attribute tuple when fraud groups exhibit synchronized behavior on both the two features simultaneously.

\para{Approximate comparisons.  }
Furthermore, we use a customizable $\tilde{=}$ operator to the set intersection and set union in Eq.\ref{eq:c1} and Eq.\ref{eq:c2}. For example, considering two edge-attribute tuples $\varepsilon_{13}=(n_1, time_1)$ and $\varepsilon_{14} = (n_1 , time_2)$ and let $\Delta$ denote a time range, then $\varepsilon_1 \ \tilde{=} \ \varepsilon_2$ if $time_1 - time_2 < \Delta$.  
To make the computation fast,  we quantize timestamps (e.g., hours) and use $=$ operator.

\para{Reducing the C-score computation complexity. } In the worst case, it takes $O(|\mathbf{M}|^2)$ to compute $C_{ij}, \forall (m_i, m_j) \in \mathbf{M}$, during the OSG construction.

In practice, we only need to compute the object pair $(m_i, m_j)$ with positive \cij. 
We use the \emph{key-value} approach to compute the $S$-score of $C$-score, described in Algorithm \ref{alg:b} (We use the similar method to compute the $S^l$-score ).

\begin{algorithm}[h]
\caption{Building OSG}\label{alg:b}
\begin{algorithmic}[1]
\Require $Dict$
\Ensure $S$-score
\For {each $n_i$ in $\mathbf{N}$}
\State $Dict[n_i] = \{m_j: m_j \in \mathbf{M}, (n_i, m_j)\in E\}$
\EndFor
\For {each $m_i, m_j$ in  $Dict[n_i]$}  $|\mathbf{I}_i \cap \mathbf{I}_j| \leftarrow 0$
\EndFor
\For {each $n_i$ in $\mathbf{N}$}
\For {each $m_i, m_j$ in  $Dict[n_i]$}
\State $|\mathbf{I}_i \cap \mathbf{I}_j| \leftarrow  |\mathbf{I}_i \cap \mathbf{I}_j|+1$ \# because $m_i,m_j$ must share $n_i$.
\EndFor 
\EndFor
\For {each $m_i, m_j$ in $\mathbf{M}$ }
\If {$|\mathbf{I}_i \cap \mathbf{I}_j|$>0}
\State $|\mathbf{I}_i \cup \mathbf{I}_j| \leftarrow  Deg(m_i) + Deg(m_j) - |\mathbf{I}_i \cap \mathbf{I}_j|$ \# $Deg(m_i)$ denotes the in-degree of $m_i$
\State $S_{ij} \leftarrow |\mathbf{I}_i \cap \mathbf{I}_j| / |\mathbf{I}_i \cup \mathbf{I}_j|$
\EndIf
\EndFor
\end{algorithmic}  
\end{algorithm}

% The key corresponds to a user $n_i$, and the value, denoted by $\mathbf{H}(n_i)$, is the set of all objects that $n_i$ has edges pointed to. Because each object in $\mathbf{H}(n_i)$ has an edge with $n_i$, we increases the value of $|\mathbf{I}_i \cap \mathbf{I}_j|$ by $1$, $\forall m_i, m_j \in \mathbf{H}(n_i) $. Thus, we compute all $|\mathbf{I}_i \cap \mathbf{I}_j|$s by searching $\mathbf{H}(n_i), \forall n_i \in \mathbf{N}$ and $\mathbf{N}^l$. And $|\mathbf{I}_i \cup \mathbf{I}_j|$ actually is the difference between $|\mathbf{I}_i \cap \mathbf{I}_j|$ and the sum of in-degrees of $m_i$ and $m_j$. Therefore, we can compute $C_{ij}>0 \wedge \forall(m_i, m_j) \in \mathbf{M}$ by finding all \emph{key-value} pairs in $G$ and $G^l$.  

Naively, it takes $O(|E|)$ to find all \emph{key-value} pairs (lines 1-2) and takes $O(|\mathbf{E}|)$ to build \obg (lines 4-10). However, we expect \obg to be sparse because an object only has positive $C$-scores with a very small subset of objects in the OSG. Empirically, we evaluate the edge density in several datasets and find the edge density quite low in all cases.  Section~\ref{sec:exp_scale} provides more details. 

%on five public real-world datasets in \cite{amazon_data} and the edge densities of \obg are all lower than 0.013. Thus the $O(|\mathbf{E}|)$ computation complexity is low in practice.   

Furthermore, due to the Zipf-law, in many real datasets, there are a few objects with extremely high in-degrees in the bipartite graph. For example, a celebrity on Twitter (or a popular store on Amazon) has a vast number of followers (or customers). In our preprocessing step, we delete these nodes and their incoming edges, as the most popular objects are usually not fraudulent.  
This preprocessing significantly reduces $|E|$ and $|\mathbf{E}|$, and thus the overall computation time for OSG construction.

\begin{figure}[tb]
\includegraphics[width = \columnwidth]{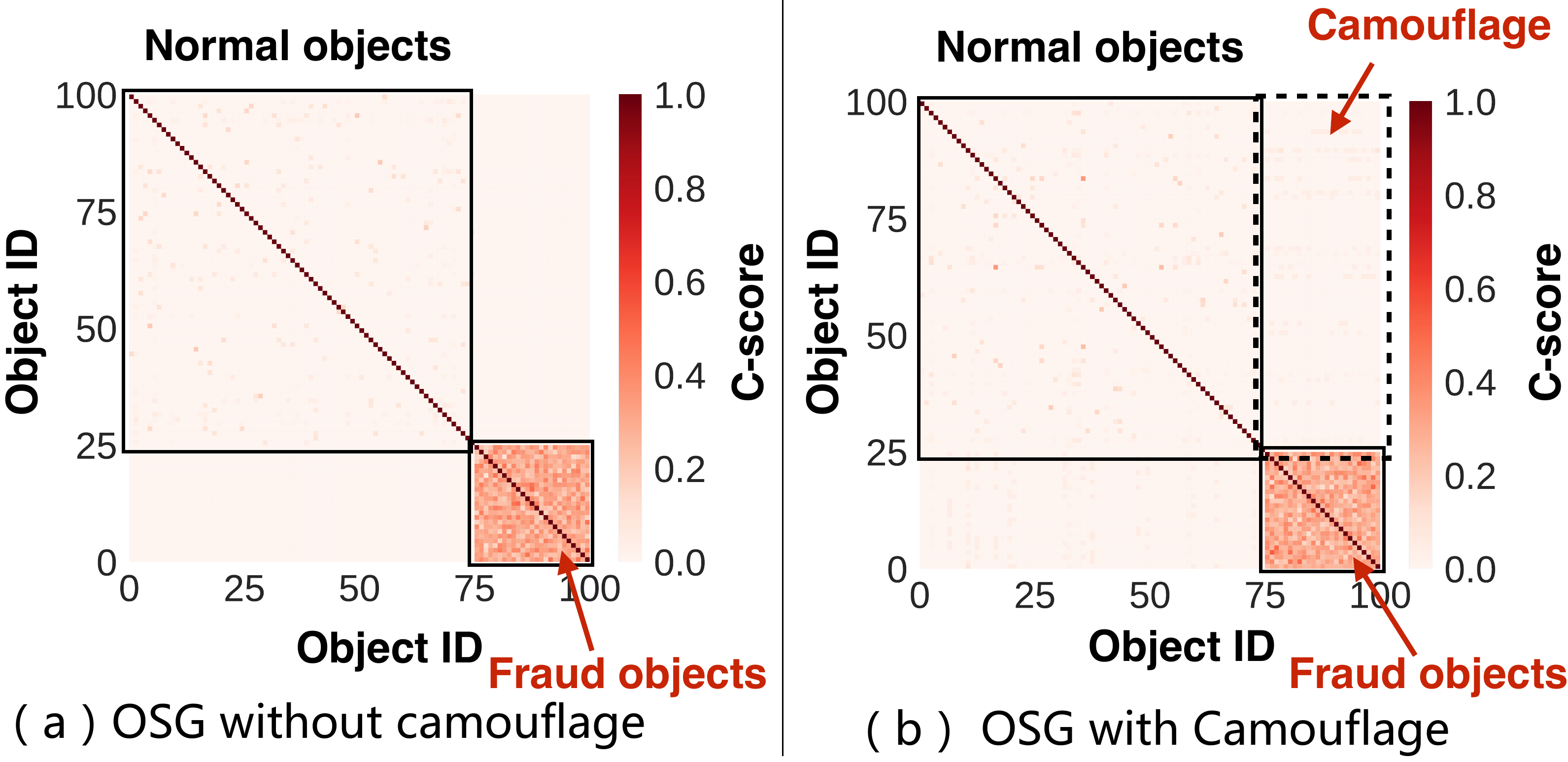}
\centering
\caption{(a) and (b) are the OSG matrixes consisting of 100 objects. ID 1-75 are normal objects sampled from the real-world dataset \cite{amazon_data}. ID 76-100 are fraud objects sampled from an injected fraud group which consists of 100 users and 50 objects (edge density within the group is 0.066). (a) Fraud objects form a dense region in OSG (edge density is 1.0).  (2) We add camouflage edges which are 75 percent of fraudulent edges. Then the dense region formed by the fraud objects does not change, and camouflage edges only produce tiny edge weights in camouflage zone.} \label{fig:cscore}
\vspace{-2.0em}
\end{figure}

\vspace{-1.0em}
\subsection{Stage II: Clustering on OSG (LPA-TK)}

%We propose an iterative algorithm to find the dense subgraphs in OSG.  The algorithm is inspired by \cite{SEMILPA} that has proven effective on detecting communities with the dense connectivity structure.  There are two differences from~\cite{SEMILPA}, though.  1) the algorithm \cite{SEMILPA} only works on the unweighted graph while we have the C-score as edge weights; and 2) we want to find all candidate subgraphs in a single run of the algorithm. 
We propose an algorithm, Label Propagation Algorithm based on Top-K Neighbors on Weighted Graph (LPA-TK), to cluster nodes of OSG into groups in face of camouflage.  The algorithm is inspired by LPA\cite{SEMILPA,raghavan2007near} that has proven effective in detecting communities with dense connectivity structure, while LPA only works on the unweighted graph and does not resist the noise/camouflage edges.

LPA-TK takes the OSG $\mathbf{G}$ as input and outputs multiple groups of similar objects, based on the similarity. Algorithms \ref{alg:main}-\ref{alg:tk} describe LPA-TK.

\begin{algorithm}[h]
\caption{LPA-TK} \label{alg:main}
\begin{algorithmic}[1]
\Require  $\mathbf{G = (M, E)}$
\Ensure $\hat{\mathcal{M}}$s
\For {each $m_i \in \mathbf{M}$} $L_i^0 = m_i$  \# Initialize labels  
\EndFor 
\State $\{\mathcal{M}_1, ..., \mathcal{M}_{\delta}\} \leftarrow \mathbf{M}$ \# Color partition: no two adjacent vertices share the same color. 
\State $t \leftarrow 0$,   $\hat{\mathcal{M}}$s$ \leftarrow \emptyset$
\While {the stop criterion is not met}
\State{$t \leftarrow t + 1 $ } \# The $t$-th iteration
\For {$\mathcal{M} \in \{\mathcal{M}_1, ..., \mathcal{M}_{\delta}\}$}
\For {each $m_i \in \mathcal{M}$}
\State $L_i^t = f(m_i, t)$ \# Algorithm 2 defines $f$ 
\EndFor
\EndFor
\EndWhile
\For {each $l \in \{L_i: m_i \in \mathbf{M}\}$}
\State $\hat{\mathcal{M}}$s $\leftarrow \hat{\mathcal{M}}$s $+ \{ \{m_i: m_i \in \mathbf{M}, L_i = l\} \}$
\EndFor
\State \Return $\hat{\mathcal{M}}$s
\end{algorithmic}  
\end{algorithm}
\paragraph{Initialization (Line 1-3)} First, we assign each node in OSG a unique label. Second, we color all nodes so that no adjacent nodes share the same color. The coloring process is efficient and parallelizable, which takes only $O(deg(G))$ synchronous parallel steps \cite{Barenb}. And the number of colors, denoted by $\delta$,  is upper bounded by $deg(G)+1$, where $deg(\mathbf{G})$ denotes the maximum degree over all nodes in $\mathbf{G}$.  

\paragraph{Iterations (Line 4-8)} In the $t$-th iteration, each node $m_i$ updates its label based on the labels of its neighbors (we leave the update criterion to Algorithm \ref{alg:tk}). Since the update of a node's label is only based on its neighbors, we can simultaneously update all the nodes sharing the same color. Thus, we need at most $\delta$ updates per iteration. The iteration continues until it meets the \emph{stop condition}:
\begin{displaymath}
\begin{split}
\forall  m_i \in M: &1) \ L_i^t = L_i^{t-1}  \\
\text{or} \ &2) \ L_i^t \neq L_i^{t-1}, \text{due to a} \  \emph{tie}
\end{split} 
\end{displaymath}
where $L_i^t$ is the label of $m_i$ in the $t$-th iteration, and \emph{tie} represents a condition that $L_i^t$ changes because $f$ returns more than one label choices (line 8). 

\paragraph{Return Groups (Line 9-11)} After the iteration terminates, we partition nodes sharing a same final label into a group.

The key difference of LPA-TK from the original algorithm LPA\cite{SEMILPA} is the designing of update criterion $f$. We consider the three choices of $f$.

\para{[Update Criterion: Sum].}
Obviously, it is significant to design $f$ that determines the final results. Based on the update criterion in \cite{SEMILPA} that only works on unweighted graphs, we first define $f$ as the following form:

\begin{equation} \label{equ:sum}
f(m_i, t)  = \mathop{\arg\max}_{l \in \{L_j^t: m_j \in \mathbf{H}(m_i)\}}  \sum_{m_j \in \mathbf{H}(m_i)} C_{ij} h(L_j^t, l),
\end{equation}
where $\mathbf{H}(m)$ is the set of neighbors of $m_i$ and $h(L_j^t, l)$ is an indicator function:
\begin{displaymath}
h(L_j^t, l) = 
\begin{cases}
1 &  \text{if} \ L_j^t == l.\\
0 & \text{otherwise}.
\end{cases}
\end{displaymath}
According to Eq.\ref{equ:sum}, the label of $m_i$ is determined by the sum of edge weights of each distinct label among its neighbors.
Unfortunately, the results of clustering deteriorate as the camouflage edges increase. Fig.\ref{fig:case}(a) gives an intuitive example. 

\para{[Update Criterian: Max].} To minimize the influence of camouflage, we propose another form of $f$:
\begin{equation} \label{equ:max}
f(m_i, t)  = \mathop{\arg\max}_{l \in \{L_j^t: m_j \in \mathbf{H}(m_i)\}} h(l, m_i) 
\end{equation}
where $\mathbf{H}(m_i)$ is the set of neighbors of $m_i$ and
\begin{displaymath}
 h(l, m_i) = \max\{C_{ij}: m_j \in \mathbf{H}(m_i), L_j^t = l\}
\end{displaymath}

Based on Eq.\ref{equ:max}, the label of $m_i$ is determined by the maximal edge weight of each distinct label among its neighbors. Although Eq.\ref{equ:max} can eliminate the influence of camouflage because the most similar neighbor of a fraud object should also be fraudulent, the result of clustering is not well and a group of objects is often divided into multiple parts.  Fig.\ref{fig:case}(b) gives an example.
\begin{figure}[h]
\includegraphics[width = 0.8 \columnwidth]{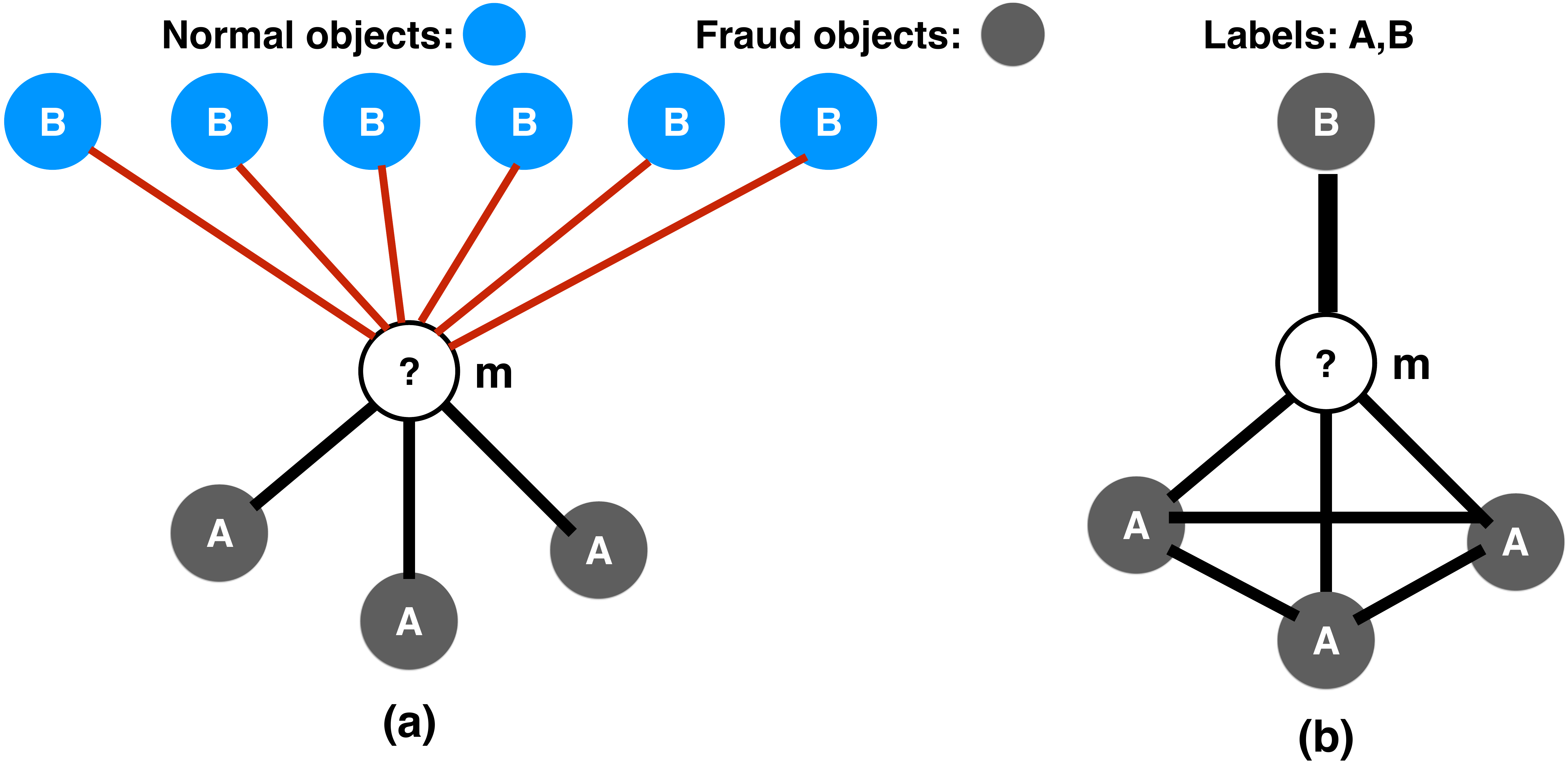}
\vspace{-1.0em}
\centering
\caption{(a) and (b) describe two possible states when determine the label of a fraud object $m$, in which the thickness of an edge indicates the $C$-score. In (a), normal objects labeled as `B' are connected with $m$ because of camouflage edges. According to Eq.\ref{equ:sum}, $m$ will be labeled as `B', while we expect it to be `A'.  In (b), according to Eq.\ref{equ:max}, $m$ will be labeled as `B', while we expect it to be `A'.} \label{fig:case}
\vspace{-1.5em}
\end{figure}

\para{[Update Criterion: Top K].}
Based on these considerations, we propose our final form of $f$, which can eliminate the influence of camouflage and keep ideal clustering results, shown in the Algorithm \ref{alg:tk}.
\begin{algorithm}[h]
\caption{$f(m_i,t)$} \label{alg:tk}
\begin{algorithmic}[1]
\Require$K$
\For {each $l \in \{L_j^t: m_j \in \mathbf{H}(m_i)\}$} \# $\mathbf{H}(m_i)$ is the set of neighbors of $m_i$
\State $h[l] \leftarrow 0$ 
\State $\mathbf{C} \leftarrow \{C_{ij}: m_j \in \mathbf{H}(m_i), L_j^t = l\}$ 
\While {$K > 0$}
\If {$|\mathbf{C}| \geq K$}
\State $h[l] \leftarrow h[l] + \max \mathbf{C}$
\State remove $\max\mathbf{C}$ from $\mathbf{C}$
\EndIf
\State $K \leftarrow K - 1$
\EndWhile
\EndFor
\State \Return $l$ if $h[l]$ is the maximum in $h$ 
\end{algorithmic}  
\end{algorithm}

In Algorithm 2, the label of $m_i$ is determined by the sum of Top-K maximal edge weights of each distinct label among its neighbors. Note that computing the sum of Top-K maximal edge weights (line 4-8) can be optimized to $O(|\mathbf{C}|)$, which is as same as the time complexity of update criteria \ref{equ:sum} and \ref{equ:max}.
  
Empirically, we set $K$ as a small integer (e.g., we set $K = 3$ in our experiments). Not only does LPA-TK resist camouflage (because camouflage edges do not change a fraud object's Top-K most similar neighbors), but also has good clustering performance (eliminate the probability that its label determined by one certain neighbors). 
In Fig.\ref{fig:case}(a) $m$ will be labeled as `A', and in Fig.\ref{fig:case}(b), $m$ will be labeled as `A', using LPA-TK.

%And Algorithm2 can be implement in $O(\mathbf{H}(m))$.  Moreover, $K$ is to be a smaller integer, and we set $K =4$ in the whole paper, which shows desirable performance.  

The algorithm is deterministic: it always generates the same graph partitions whenever it starts with the same initial node labels. Furthermore, the algorithm converges provably. We formally prove its convergence with the following theorem:

\begin{theorem}
Given a graph $\mathbf{G} = (\mathbf{M}, \mathbf{E})$, $ \forall (m_i, m_j) \in \mathbf{M}$ and \cij $\in \mathbf{E}$, the algorithm\ref{alg:main} uses the updating criterion Algorithm \ref{alg:tk} and the stop condition. Then the algorithm \ref{alg:main} converges.
\end{theorem}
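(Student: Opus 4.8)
The plan is to prove termination of the \texttt{while} loop by exhibiting a bounded \emph{monovariant}: a potential function $\Phi$ over label configurations that never decreases under the update rule and takes values in a finite set, so that it can strictly increase only finitely often. The stop condition is engineered to coincide exactly with the situation in which no node can strictly increase its local score, so once $\Phi$ can no longer rise, every node either keeps its label or flips only through a tie, and the loop halts. Combined with the determinism already noted in the text, establishing such a $\Phi$ yields convergence.

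First I would exploit the coloring from the initialization. Because no two adjacent vertices share a color, each color class $\mathcal{M}_k$ is an independent set, so while the inner loop rewrites the labels of $\mathcal{M}_k$ the labels of all neighbors (which lie in other classes) stay frozen. A synchronous sweep of $\mathcal{M}_k$ therefore behaves like a batch of independent single-node $\arg\max$ updates against a fixed neighbor labeling --- a Gauss--Seidel-style update rather than a fully synchronous one, which is what rules out the period-two oscillations that afflict vanilla LPA. Within one sub-step each updated node can only raise the score it is maximizing.

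For the Sum criterion (Eq.~\ref{equ:sum}) this immediately gives a clean monovariant. Take $\Phi = \sum_{(m_i,m_j)\in \mathbf{E}} C_{ij}\,\mathbf{1}[L_i = L_j]$. When sweeping $\mathcal{M}_k$, every edge touched by the update has exactly one endpoint in $\mathcal{M}_k$ (none has both, since $\mathcal{M}_k$ is independent), so the change in $\Phi$ decomposes as a sum of the per-node gains $\sum_{m_j\in\mathbf{H}(m_i)}C_{ij}\mathbf{1}[L_i=L_j]$, each nonnegative by the $\arg\max$. Thus $\Phi$ is nondecreasing, it is bounded above by $\sum_{(m_i,m_j)}C_{ij}$, and since the weights take finitely many values it ranges over a finite set and must stabilize; at a maximum no node can strictly improve, which is precisely the tie-or-no-change stop condition.

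The hard part is the actual update rule, the Top-K criterion of Algorithm~\ref{alg:tk}, for which this edge-separable $\Phi$ is no longer monotone. The difficulty is that the Top-K selection is \emph{asymmetric}: node $m_i$ may count $C_{ij}$ among its $K$ largest same-label weights while $m_j$ does not, so when $m_i$ changes its label it can simultaneously drop out of a neighbor's Top-K set and thereby lower that neighbor's contribution. The naive node-based potential $\sum_i \mathrm{TopK}_i(L_i)$ fails for exactly this reason: the frozen-label neighbors' own Top-K sums shift when their neighbors in $\mathcal{M}_k$ move. To close this gap I would either (i) restrict the potential to mutually reinforcing edges --- those lying in the Top-K set of both endpoints --- or, more robustly, (ii) replace the scalar $\Phi$ by a lexicographically ordered multiset of the sorted local scores and argue that each non-tie update strictly advances it in this order while keeping it in a finite, well-ordered range. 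Either route reduces to showing the chosen quantity cannot cycle; finiteness of the achievable values then forces the sweep to reach a configuration where every node's move is a tie or a no-op, i.e. the stop condition, giving convergence.
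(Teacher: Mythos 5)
Your overall skeleton --- a bounded potential that cannot decrease under a non-tie update, combined with the observation that sweeping an independent color class is a Gauss--Seidel step against frozen neighbor labels --- is exactly the shape of the paper's argument. The paper's proof takes $f(t)$ to be the \emph{number} of monochromatic edges of $\mathbf{G}$, asserts that $f(t)$ strictly increases whenever some vertex changes its label, and concludes the iteration count is at most $|\mathbf{E}|$. Your weighted variant $\Phi = \sum_{(m_i,m_j)\in\mathbf{E}} C_{ij}\,\mathbf{1}[L_i=L_j]$ is the correct repair of that potential for the Sum criterion of Eq.~\ref{equ:sum} (the unweighted count can drop when a node trades many light monochromatic edges for a few heavy ones, so the paper's $f$ is not obviously monotone even there), and your treatment of that case is complete and sound.

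The gap is that you stop exactly where the theorem actually lives: Algorithm~\ref{alg:main} uses the Top-K criterion of Algorithm~\ref{alg:tk}, and for that rule you correctly show that the edge-separable $\Phi$ is not monotone (a node can switch to a label whose top-$K$ sum is larger but whose total incident monochromatic weight is smaller) and that the node-based sum of top-$K$ scores is not monotone either (a neighbor's top-$K$ set can lose the edge to the node that just moved). Having demolished both candidates, you offer two repair routes --- restricting to mutually-top-$K$ edges, or a lexicographic multiset of local scores --- but verify neither, and both face the same obstruction: the set of mutually reinforcing edges and the neighbors' local scores shift whenever a node moves, so monotonicity is again not automatic. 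As written, the proposal therefore does not establish convergence for the update rule the theorem is actually about. To be fair, the paper's own one-line proof has the identical hole: it never justifies why its potential increases under the Top-K update, and by your own counterexamples it need not. Your analysis is more honest about the difficulty than the paper's, but it does not close it.
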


\begin{proof}
Let $f(t)$ be the number of monochromatic edges of \obg at $t$-th iteration step, and $f(t) \leq |\mathbf{E}|$. In the $t$-th step, at least one vertex $i$ changes its label if it does not meet the stop condition. This indicates that $f(t)$ strictly increases during step $t$, i.e. $f(t) > f(t-1)$. Thus the number of iterations is upper bounded by $|\mathbf{E}|$.   
\end{proof}
\smallskip

\vspace{-1.0em}
\subsection{Stage III: Spot Suspicious Groups ($\mathcal{F}$)}\label{sec:detectuser} 

After generating all candidate groups, in this section, 
we propose an interpretable suspiciousness metric $\mathcal{F}$ to score each group and find top suspicious groups. Given a fraud group $\mathcal{M} \in \hat{\mathcal{M}}$s (returned by LPA-TK), let $\mathcal{G}$ be the subgraph of $\mathbf{G}$ induced by $\mathcal{M}$,  $\mathcal{G = (M, E)}$. Then $\mathcal{F}$ follows the form:
\begin{equation}
\mathcal{F_G}= \frac{\sum_{(m_i, m_j) \in \mathcal{E}}C_{ij} \cdot  \sum_{(m_i, m_j) \in \mathcal{E}}  |\mathbf{I}_i \cap \mathbf{I}_j|}{|\mathcal{M}|(|\mathcal{M}|-1)^2} ,
\end{equation}
where $\mathcal{F_G} = \mathcal{F_G}^1  \cdot  \mathcal{F_G}^2 \cdot |\mathcal{M}|$,
\begin{displaymath}
\mathcal{F_G}^1 = \frac{\sum_{ (m_i, m_j) \in \mathcal{E}} C_{ij}}{|\mathcal{M}|(|\mathcal{M}|-1)} 
\end{displaymath}
and 
\begin{displaymath}
\mathcal{F_G}^2 = \frac{\sum_{ (m_i, m_j) \in \mathcal{E}} \left | \mathbf{I}_i \cap \mathbf{I}_j \right | }{|\mathcal{M}|(|\mathcal{M}|-1)}. 
\end{displaymath}
Intuitively, $\mathcal{F_G}^1$ is the average value of $C$-score on all edges of \subg, $\mathcal{F_G}^2$ is the average number of edges pointed from same users on all object pairs of \subg. 

The advantage of $\mathcal{F}$-score is that the score obeys the following good properties including axioms proposed in \cite{CROSSSPOT} that all good algorithms should have. First, we present a well-known metric,  \textbf{edge density} denoted by $\rho_{edge} = \frac{|\mathcal{E}|}{|\mathcal{M}|(|\mathcal{M}|-1)}$. And we use `$\uparrow$', `$\downarrow$' and
`=' to represent `increase', `decrease', and `not change'.

\begin{enumerate}[(i)]
\item AXIOM 1. \emph{[Object Size].} Keeping $\rho_{edge}$, $C_{ij}$, and $|\mathbf{I}_i \cap \mathbf{I}_j|$ fixed, a larger \subg is more suspicious than one with a smaller size.
\begin{displaymath}
(\mathcal{F_G}^1 =)  \wedge  (\mathcal{F_G}^2 =) \wedge (|\mathcal{M}|\uparrow ) \ \Rightarrow  \ \mathcal{F_G}\uparrow
\end{displaymath}

\item AXIOM 2. \emph{[Object Similarity].} Keeping $\rho_{edge}$, $|\mathbf{I}_i \cap \mathbf{I}_j|$, and $|\mathcal{M}|$ fixed, a \subg with more similar object pairs is more suspicious. 
\begin{displaymath}
 C_{ij} \uparrow \ \Rightarrow  (\mathcal{F_G}^1 \uparrow) \wedge  (\mathcal{F_G}^2 =) \ \Rightarrow  \ \mathcal{F_G}\uparrow
\end{displaymath}

\item AXIOM 3. \emph{[User Size].} Keeping $\rho_{edge}$, $C_{ij}$, and $|\mathcal{M}|$  fixed,  a fraud object group (\subg) connected with more fraud users is more suspicious.
\begin{displaymath}
|\mathbf{I}_i \cap \mathbf{I}_j|\uparrow  \ \Rightarrow  (\mathcal{F_G}^1 =) \wedge  (\mathcal{F_G}^2 \uparrow) \ \Rightarrow  \ \mathcal{F_G}\uparrow
\end{displaymath}

\item AXIOM 4. \emph{[Edge Density].} Keeping $C_{ij}$, $|\mathbf{I}_i \cap \mathbf{I}_j|$, and $|\mathcal{M}|$ fixed,  a denser \subg is more suspicious. 
\begin{displaymath}
\rho_{edge} \uparrow \Rightarrow (\mathcal{F_G}^1 \uparrow)  \wedge  (\mathcal{F_G}^2  \uparrow)  \ \Rightarrow  \ \mathcal{F_G}\uparrow
\end{displaymath}

\item AXIOM 5. \emph{[Concentration.]} With the same total suspiciousness, a smaller \subg is more suspicious. We define the total suspiciousness as  $\sum_{(m_i, m_j) \in \mathcal{E}} (C_{ij} + | \mathbf{I}_i \cap \mathbf{I}_j|$). 
\begin{displaymath}
 |\mathcal{M}| \downarrow \ \Rightarrow  \ \mathcal{F_G}\uparrow
\end{displaymath}
\end{enumerate}

Note that naive metrics do not meet all axioms. For example, the edge density is not a good metric because it does not satisfy AXIOM 1-3 and 5. 

Therefore, leveraging $\mathcal{F}$, we can sort groups in descending order of suspiciousness and catch top suspicious groups.

Given suspicious $\mathcal{G}$, we catch fraud users $\mathcal{N}$ from $\mathcal{G}$ comprised of fraud objects $\mathcal{M}$. The approach follows the form: 

\begin{equation}\label{eq:uu}
\mathcal{N} = \bigcup_{\forall m_i,m_j \in \mathcal{M}} H_i \cap H_j
\end{equation}
where $H_i = \{n: \forall n \in \mathbf{N}, (n, m_i) \in E\} $ is the set of users having edges to $m_i$. 

To reduce false alarms in $\mathcal{N}$, we filter out users with low out-degrees in the subgraph induced by $\mathcal{N}$ and $\mathcal{M}$ of $G$, because a normal user may interact with a few fraud objects by accident while it is unlikely that it interacts with many fraud objects.

\subsection{Analysis} \label{sec:analysis}

There are four advantages of \sysname ($C$ + LPA-TK + $\mathcal{F}$) :
\begin{enumerate}
\item \emph{[Camouflage-resistance].}  $C$ + LPA-TK is inherent to resist camouflage (see Theorem \ref{the:cam}). However, for LPA\cite{SEMILPA,raghavan2007near}, its group detection results can be easily destroyed by camouflage (demonstrated in Sec.\ref{sec:lpa}).

\item \emph{[Capturing Loose Synchrony].}  $C$ + LPA-TK + $\mathcal{F}$ focuses on catching loosely synchronized behavior, because its top-K most similar neighbors do not change in OSG for a fraud object. However, The density signal can be decreased significantly by synchrony reduction \cite{FRAUDAR, CROSSSPOT,SPOKEN} (demonstrated in Sec.\ref{exp:syn}).  

\item \emph{[Clustering global similarities].}
Using $C$ + LPA-TK, we cluster nodes based on their similarity. However, 
\cite{FRAUDAR, DCUBE, MZOOM} group nodes based on their degree or density features.    Fig.\ref{fig:points} in Sec.~\ref{sec:exp} shows an intuitive example of the clustering quality.

\item \emph{[Scalability].} 
$C$ + LPA-TK cluster all objects into groups in one run with near-linear time complexity (Sec.\ref{sec:analysis}). Leveraging $\mathcal{F}$, we can obtain Top-$k$ suspicious groups, while \cite{FRAUDAR, DCUBE, MZOOM} only detect a single group per run.  
\end{enumerate}

\para{Time complexity.}
  In the OSG construction stage, it takes $O(|E|+|\mathbf{E}|)$ time, based on the optimization (Sec.\ref{sec:con}). In Stage II, the time cost is the product of the number of iterations and the number of colors, where the former value has been experimentally indicated to grow logarithmically with graph size \cite{SEMILPA} and the latter value is bounded by $deg(\mathbf{G})+1$. In Stage III, it takes $O(|\mathcal{E}|)$ to compute \fscore and catch fraud users of \subg, where $|\mathcal{E}| << |\mathbf{E}|$.  Thus, \sysname has near-linear time complexity.

\para{Capturing loosely synchronized behavior.}  We use a concrete example to show why the algorithm can handle loosely synchronized behaviors.  

Consider a fraud group with 100 fraud users and 50 fraud objects, and each fraud user creates 30 edges to random fraud objects. Let $\mathcal{G}_{orig}$ denote the induced subgraph induced in the original user-object bipartite graph, 
and let $\mathcal{G}_{OSG}$ denote the subgraph formed by fraud objects in OSG. We compute the edge density $\rho_{edge}$ and $\mathcal{F_G}^1$ in Eq.(6) for both $\mathcal{G}_{orig}$ and $\mathcal{G}_{OSG}$.  We have
\begin{displaymath}
\rho_{edge}(\mathcal{G}_{orig}) = 0.134 \ \text{VS} \ \rho_{edge}(\mathcal{G}_{OSG}) = 1 \wedge \mathcal{F}^1_{\mathcal{G}_{OSG}} = 0.506.
\end{displaymath}

Obviously, the subgraph in OSG is much denser than the original bipartite graph. Then, let us reduce the synchrony of fraud group by doubling the number of fraud users and keep the same number of edges. Then we have
\begin{displaymath}
\rho_{edge}(\mathcal{G}_{orig}) = 0.049 \ \text{VS} \ \rho_{edge}(\mathcal{G}_{OSG}) = 1 \wedge \mathcal{F}^1_{\mathcal{G}_{OSG}} = 0.251.
\end{displaymath}
It shows that $\mathcal{G}_{OSG}$ is affected slightly by the reduction of synchrony, compared to $\mathcal{G}_{orig}$. Furthermore, as normal users hardly exhibit synchronized behavior, the \cscore of normal object pairs are close to zero. Thus, \sysname ($C$ + LPA-TK + $\mathcal{F}$) is inherently more effective than approaches relying on density \cite{FRAUDAR, SPOKEN, FBOX}.
%However, using the attack strategy easily avoids the detection of existing methods. In the experiment, without camouflage, \sysname increases the avoidance cost by at least four times when fraudsters generate the same number of fraudulent edges, compared to the state-of-art method Fraudar\cite{FRAUDAR} (see Fig.\ref{fig:obj}). 

\para{Camouflage-resistance. } 
\sysname is robust to resist different types of camouflage. There are two reasons. First, the $\mathcal{F}$-scores of subgraphs induced by fraud object groups do not decrease while adding camouflage edges. Formally, we give the following theorem.  

\begin{theorem} \label{the:cam} 
 Let $\mathcal{G}$ denote a subgraph induced by fraud objects $\mathcal{M}$ , and $\mathcal{N}$ denote the fraud users. $\mathcal{M}$ and $\mathcal{N}$ are from a single fraud group. \subg does not change when users in $\mathcal{N}$ add camouflage edges to non-fraud objects.
 \end{theorem}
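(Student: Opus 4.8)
The plan is to show that the entire subgraph $\mathcal{G}$ induced by the fraud objects $\mathcal{M}$ in the OSG is invariant under the camouflage operation, by tracing how camouflage edges affect the quantities that define $\mathcal{G}$. Recall that $\mathcal{G}$ is completely determined by (a) its vertex set $\mathcal{M}$ and (b) the weights $C_{ij}$ on every pair $(m_i,m_j)$ with $m_i,m_j\in\mathcal{M}$, where by the definition of the $C$-score each $C_{ij}=S_{ij}+S_{ij}^l$ depends only on the incoming-edge sets $\mathbf{I}_i,\mathbf{I}_j$ and their labeled counterparts $\mathbf{I}_i^l,\mathbf{I}_j^l$. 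So it suffices to prove that none of these sets changes when users in $\mathcal{N}$ add camouflage edges to non-fraud objects.

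First I would fix the camouflage operation precisely: a camouflage edge has the form $\varepsilon=(n,m')$ with $n\in\mathcal{N}$ and $m'\notin\mathcal{M}$. By definition $\mathbf{I}_i=\{\varepsilon_{ji}:(n_j,m_i)\in E\}$ collects exactly the edges whose head is $m_i$. Since every camouflage edge has head $m'\neq m_i$ for each $m_i\in\mathcal{M}$, no camouflage edge lands in any $\mathbf{I}_i$ with $m_i\in\mathcal{M}$; it only enlarges $\mathbf{I}_{m'}$. The same holds for the labeled sets $\mathbf{I}_i^l$. Hence for every pair $m_i,m_j\in\mathcal{M}$ the four sets $\mathbf{I}_i,\mathbf{I}_j,\mathbf{I}_i^l,\mathbf{I}_j^l$ are unchanged, so $|\mathbf{I}_i\cap\mathbf{I}_j|$, $|\mathbf{I}_i\cup\mathbf{I}_j|$ and their labeled analogues are unchanged, and therefore $S_{ij}$, $S_{ij}^l$ and $C_{ij}$ are all invariant. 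Assembling these invariances gives the claim: the vertex set $\mathcal{M}$ is unaffected (camouflage adds no fraud objects), and since every intra-$\mathcal{M}$ weight is unchanged, both the edge set of $\mathcal{G}$ (the pairs with $C_{ij}>0$) and all its weights are preserved, so $\mathcal{G}$ does not change.

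The one point that needs care — and the only place the argument could appear to fail — is that camouflage genuinely does create new OSG edges: if a fraud user $n\in\mathcal{N}$ is already connected to some $m_i\in\mathcal{M}$ and now also to $m'\notin\mathcal{M}$, then $m_i$ and $m'$ come to share the user $n$ and a positive similarity $C_{i,m'}$ appears (this is exactly the faint ``camouflage zone'' of Fig.~\ref{fig:cscore}). The resolution is that such an edge joins a fraud object to a non-fraud object and therefore lies outside the subgraph \emph{induced} by $\mathcal{M}$; it never enters $\mathcal{G}$. I would therefore stress in the write-up that the theorem concerns the induced subgraph on $\mathcal{M}$ alone, so these external edges are irrelevant, and that the scope is camouflage directed at non-fraud objects (the random and biased types of Sec.~\ref{sec:camouflage}), not the reverse or hijacking variants that would instead add edges into the fraud objects themselves and thus alter the $\mathbf{I}_i$ sets.
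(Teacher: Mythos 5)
Your proposal is correct and follows essentially the same route as the paper's own proof: camouflage edges point only to non-fraud objects, so the sets $\mathbf{I}_i$ for $m_i \in \mathcal{M}$ are untouched and every intra-$\mathcal{M}$ weight $C_{ij}$ is preserved. Your added remarks on the labeled sets, the new fraud-to-normal OSG edges lying outside the induced subgraph, and the restriction to the random/biased camouflage types are sensible clarifications but do not change the argument.
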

 
\begin{proof}
Let $m_i$ and $m_j$ denote two fraud objects, $(m_i, m_j) \in \mathcal{M}$. Camouflage only introduces edges between $\mathcal{N}$ and normal objects. It does not add or remove edges pointing to $\mathcal{M}$, which demonstrates that $\mathbf{I}_i$ and $\mathbf{I}_j$ in Eq.(1) do not change. Thus $C_{ij} \in \mathcal{G}$ does not change, $\forall (m_i, m_j) \in \mathcal{M}$. 
\end{proof}
\smallskip

Second, in OSG, a camouflage edge between a fraud user and a normal object only produces a quite small value of $C$-score due to the denominator of Eq. (\ref{equ:cscore}). Fig. \ref{fig:cscore} (b) provides a typical case.  For a fraud user, this indicates that  camouflage edges do not change its the most top-$K$ similar neighbors. Thus, the subgraphs induced by fraud groups can be effectively detected by LPA-TK.

\para{Effectiveness of the semi-supervised mode.}
Given a subset of labeled fraud users, \sysname switches to the semi-supervised mode. 
Because of the design of $C$-score, the partially labeled data does enhance the similarities between fraud objects in a group and increase the density of induced subgraph on OSG. Thus, unsurprisingly, LPA-TK will more accurately cluster fraud objects into groups. The experiments in Section \ref{exp:syn} show the fact.
%We show that this semi-supervised mode does enhance the performance of \sysname with the following theorem.
% 
%\begin{theorem}
%Let $\mathcal{N}$ and $\mathcal{M}$ denote users and objects of a fraud group, and $\mathcal{G}$ represent the subgraph induced by $\mathcal{M}$, $\mathcal{G} \subset \mathbf{G}$. And we define the total suspiciousness $T_s$ as 
%\begin{displaymath}
%T_s = \sum_{\forall (m_i, m_j) \in \mathcal{E}} C_{ij}.
%\end{displaymath}
%Then, given a labeled subset $\mathcal{N}^l \subset \mathcal{N}$, the new subgraph $\mathcal{G}^l$ induced by $\mathcal{M}$ becomes more suspicious.
%\end{theorem}
%  
%
%\begin{proof}
%In the unsupervised mode, $T_s = \sum_{\forall (m_i, m_j) \in \mathcal{E}} \mathcal{S}_{ij}$, while in semi-supervised mode, $T_s^l = \sum_{\forall (m_i, m_j) \in \mathcal{E}}  (\mathcal{S}_{ij} +\mathcal{S}_{ij}^l)$, i.e. Thus
%\end{proof}
%\begin{displaymath}
%T_s^l > T_s \ \Rightarrow \ \mathcal{F_G}^l > \mathcal{F_G}.
%\end{displaymath}
%Furthermore, the edge density of $\mathcal{G}^l$ does not decrease, in contrast to that of $\mathcal{G}$. Therefore, a fraud group forms a denser subgraph containing more suspiciousness in this mode, leading to  reliable performance of \sysname.
%\smallskip

%In our experiments, the semi-supervised mode successfully detects fraud users on the datasets where unsupervised methods perform poorly. 
%  

\section{Experiments and results} 
\label{sec:exp}

We want to answer the following questions in the evaluation:

\begin{enumerate}[(1)]
\item	How does \sysname handle loosely synchronized behavior?
\item Is \sysname robust with different camouflage{?}
\item Does the semi-supervised mode improve the performance?
\item Is \sysname scalable to large real-world datasets?
\end{enumerate}
Table\ref{tab:details} gives the details of datasets used in the paper.

\begin{table}[h]
	\caption{Datasets used in experiments. }\label{tab:details}
	\centering
	\begin{tabular}{cc|cccc}
		\toprule
		 datasets & edges & datasets & edges \\
		 \midrule 
		 AmazonOffice\cite{amazon_data} &53K  & YelpChi\cite{YELP} &67K    \\
		 AmazonBaby\cite{amazon_data}&160K  & YelpNYC\cite{YELP}& 359K   \\
		 AmazonTools\cite{amazon_data}& 134K & YelpZip\cite{YELP}& 1.14M  \\
		 AmazonFood\cite{amazon_data} &1.3 M & DARPA\cite{DARPA} & 4.55M  \\
		 AmazonVideo\cite{amazon_data} & 583K & Registration  & 26k\\
		 AmazonPet\cite{amazon_data} & 157K & &    \\
		  \bottomrule
	\end{tabular}
	\vspace{-1em}
\end{table}

\para{Implementation and existing methods in comparison.} 
We implemented \sysname by Python and we run all experiments on a server with two 10-core $2.2$ GHz Intel Xeon E5 CPUs and $64$ GB memory.
We compared \sysname with the following three state-of-the-art methods that focus on synchronized behavior with application to fraud detection. 

\begin{enumerate}[(1)]
\item 	\emph{Fraudar}\cite{FRAUDAR} finds the subgraph with the maximal average degree in the bipartite graph using an approximated greedy algorithm.  It is designed to be camouflage-resistance. 
\item \emph{CatchSync}\cite{CATCHSYNC} specializes in catching rare connectivity structures of fraud groups that exhibit the synchronized behavior, it proposes the synchronicity and normality features based on the degree and HITS score of the user.
\item \emph{CrossSpot}\cite{CROSSSPOT} detects the dense blocks which maximize the suspiciousness metric in the multi-dimensional dataset.
\end{enumerate}

We did our best to fine-tune the hyper-parameters to achieve their best performances. For CrossSpot, we set the random initialization seeds as 400, 500 and 600, and chose the one with the maximal F1-score. Fraudar detects the subgraph with the maximal average degree and multiple subgraphs by deleting previously detected nodes and their edges. For all methods, we test the performance according to the rank of the suspiciousness scores.
We compared the performance using the standard metric, F1 score (the harmonic mean of precision and recall) across all algorithms. 

\para{FraudTrap and FraudTrap+.}  We run \sysname in two modes. The unsupervised mode (\sysname) and the semi-supervised mode (\sysnamep) assuming  5\% fraud users are randomly labeled. And in all experiments, we set $K = 3$ for LPA-TK. In the experiments regarding [Amazon] datasets, assume $\mathcal{M}$ is a fraud object group returned by \sysname, $\mathcal{N}$ is a fraud user group returned by Eq.\ref{eq:uu}. Then we filtered out $n$ if the out-degree of $n$ is less than 3 in the subgraph induced by $\mathcal{(N,M)}$ of $G$, $\forall n \in \mathcal{N}$.

\para{Fraud Group Formulation.}
 To simulate the attack models of smart fraudsters, we used the same method as \cite{FRAUDAR,CATCHSYNC} to generate labels: inject fraud groups into Amazon datasets (\textit{[Amazon]} datasets contain six collections of reviews for different types of commodities on Amazon, listed in Table \ref{tab:details}.) . To accurately depict the injection, we formulate the fraud group as the following.

\begin{definition} 
[$\rho$-Synchrony] Given a subgraph $\mathcal{G}(|\mathcal{N}|, |\mathcal{M}|, \rho,  \theta)$ induced by a group $(\mathcal{N,M})$ in $G$ where $\mathcal{N}$ is a set of users, $\mathcal{M}$ is a set of objects. (1) For each $n \in \mathcal{N}$, $ \exists \mathcal{W} \subseteq \mathcal{M}$ where for each $m \in \mathcal{W}$, the edge $(n, m)$ exists. We define $\rho$ as
\begin{displaymath}
\rho = \frac{\overline{| \mathcal{W}|}}{|\mathcal{M}|},
\end{displaymath}
where $\overline{| \mathcal{W}|}$ is the mean for all $|\mathcal{W}|$s. (2) For each $n \in \mathcal{N}$, $\exists \mathcal{W}$ and $\mathcal{W} \cap \mathcal{M} = \emptyset$, where for each $m \in \mathcal{W}$, the edge $(n, m)$ exists. We set $\theta = \overline{| \mathcal{W}|}$, and $\overline{| \mathcal{W}|}$ is the mean for all $|\mathcal{W}|$s.
\end{definition}
Thus, we use  $\mathcal{G}(|\mathcal{N}|, |\mathcal{M}|, \rho,  \theta)$ to represent a fraud group, where $\rho$ represents how loosely its synchronized behavior is and $\theta$ denotes the number of camouflage edges of each user on average. Naturally, $\mathcal{N}$ and $\mathcal{M}$ are labeled as `fraudulent'.
Before we evaluate the performance of \sysname, we first verify the effectiveness of LPA-TK.

\subsection{Performance of LPA-TK} \label{sec:lpa}

We recall that LAP-TK has the best clustering performance and camouflage-resistance. We design this experiment to demonstrate its performance. We injected a fraud group $\mathcal{G}(|\mathcal{N}| =200, |\mathcal{M}| = 50, \rho = 0.3,  \theta)$ into AmazonOffice dataset, where $\rho = 0.3$ indicates that each fraud user of $\mathcal{N}$ reviews 15 fraud objects of $\mathcal{M}$ and $\theta$ represents the number of camouflage edges of each fraud user on average. We varied $\theta$ to specifically examine the resistance to camouflage of each clustering algorithm.  Let $G$ denote the bipartite graph formed by injected AmazonOffice and we built the OSG of $G$ using the method in section \ref{sec:con}, $\mathbf{G}$. We run each algorithm on $\mathbf{G}$ and evaluated the clustering performance and the performance of detecting fraud objects, and we used metric $\mathcal{F}$ to compute suspiciousness scores of detected groups. Note that we only injected one group and thus $\mathcal{M}$ should be clustered into one group. \emph{LPA} denotes the algorithm \cite{SEMILPA} that treats each edge weight equally, \emph{LPA-Sum} denotes the Algorithm \ref{alg:main} + Eq.\ref{equ:sum} and \emph{LPA-Max} denotes the Algorithm \ref{alg:main} + Eq.\ref{equ:max}.

\begin{table}[h]
	\vspace{-1em}
	\caption{Clustering performance on the [Amazon] datasets. `Num' represents `the number of groups $\mathcal{M}$ is divided into. $\theta$ denotes the number of camouflage edges of each fraud user. }\label{tab:lpa}
	\centering
	\begin{tabular}{c|c|c|c|c}
		\toprule
		& $\theta $ = 0 &$\theta $ = 5 &$\theta $ = 10 &$\theta $ =20\\
		\midrule  
		& Num|AUC &Num|AUC & Num|AUC & Num|AUC \\
		\midrule  
		 LPA  & 1 |1.0 & 1 |0.787  &1 | 0.727&  1 |0.731 \\
		 LPA-Sum & 1 | 1.0 & 1 | 1.0 & 1 | 0.787& 1 | 0.761  \\
		 LPA-Max  & 14 | 0.998 & 14 | 0.996 &  13 | 0.998  &10 |0.991 \\
		\midrule  
		 LPA-TK & \textbf{1} |\textbf{1.0}& \textbf{1} |\textbf{1.0} & \textbf{1} | \textbf{0.999} & \textbf{1}| \textbf{0.998}\\
		\bottomrule
	\end{tabular}
	\vspace{-1em}
\end{table}

Table \ref{tab:lpa} presents the clustering performance of each algorithm. In the setup, we expect `AUC' = 1.0 for the perfect performance of detecting fraud objects and `Num' = 1 for the perfect clustering result. Then we have the following observations: (1) without camouflage ($\theta = 0$), LPA has an ideal performance. However, once camouflage is added ($\theta \geq 0$), its performance is destroyed and \emph{LPA clustered all objects into one group} (thus `num' = 1). (2) LPA-Sum shows weak camouflage-resistance, and it performance deteriorates as camouflage edges increases. (3) LAP-Max resists camouflage obviously. However, it divided $\mathcal{M}$ into multiple groups, which is not good to group analysis and inspection. (4) Our algorithm LPA-TK has perfect performance. It clustered all fraud objects into one group and separated the group from legit objects even in face of camouflage. Thus the experiments demonstrate the advantages of LPA-TK.

\subsection{Overall Performance of FraudTrap} \label{exp:syn}

To [Amazon] datasets, we designed two fraud group injection schemes: the first is to examine in detail the performances for detecting loosely synchronized behavior and resisting camouflage; the second is for more general performance evaluation.

\begin{figure}
	\includegraphics[width = \columnwidth]{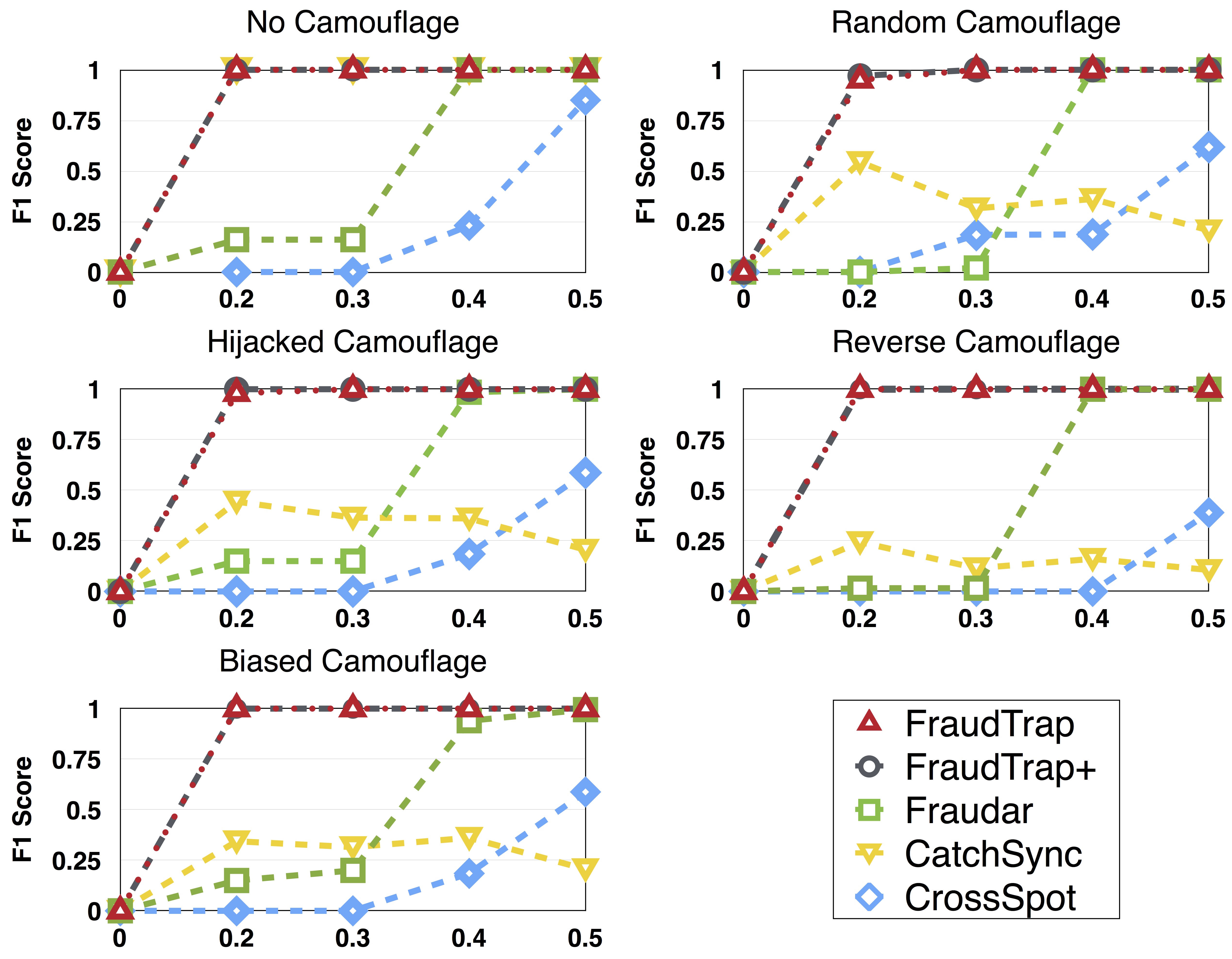}
	\vspace{-1.5em}
	\caption{X-axis: $\rho$. Performance on detecting fraud objects} \label{fig:obj}
\end{figure}

\begin{figure}
	\includegraphics[width = \columnwidth]{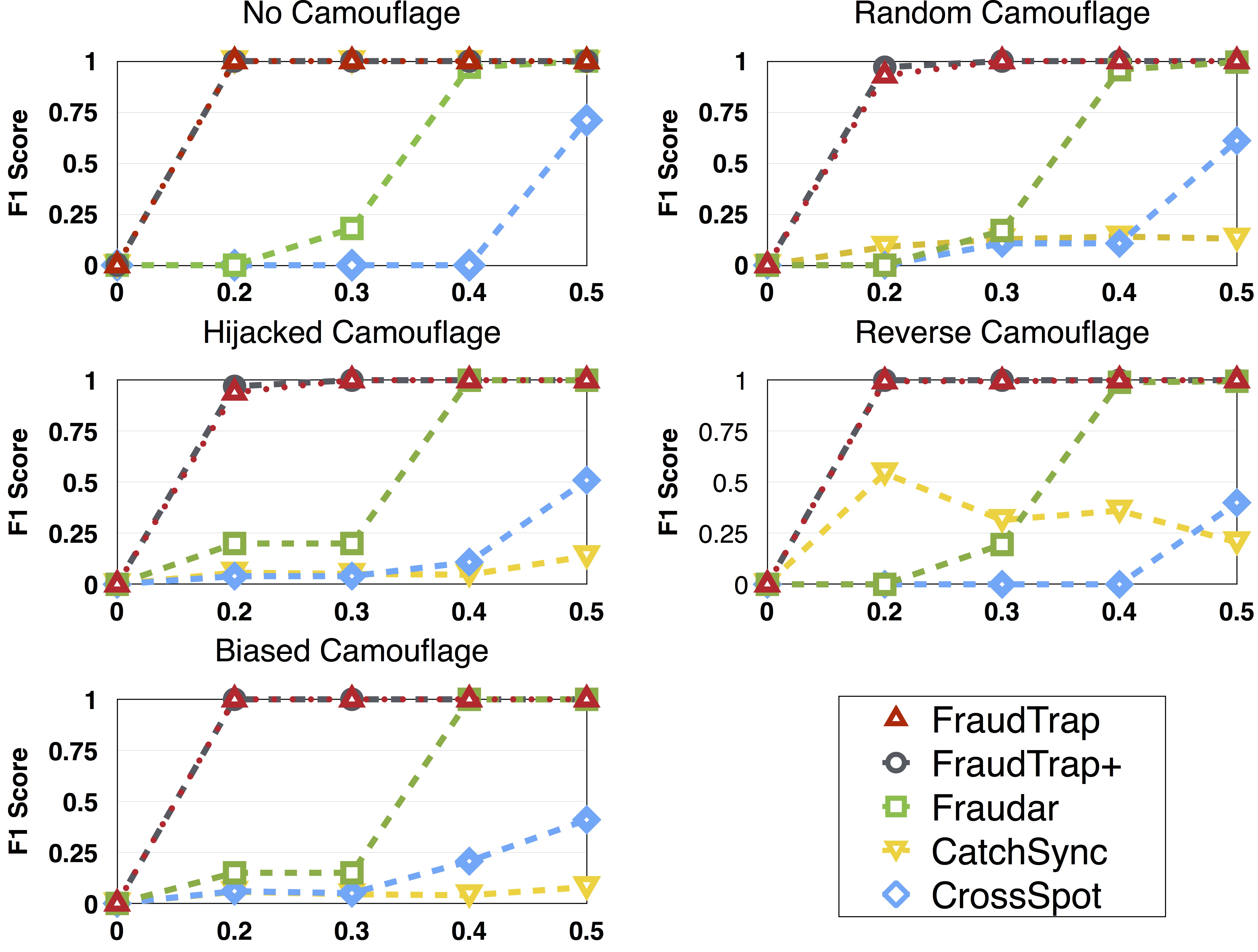}
		\vspace{-1.5em}
	\caption{X-axis: $\rho$. Performance on detecting fraud users} \label{fig:user}
\vspace{-1.5em}
\end{figure}

\para{[Injection Scheme 1].} We chose AmazonOffice as the representative and injected a fraud group into it with varying configurations. We set the fraud group as $\mathcal{G}(|\mathcal{N}| =200, |\mathcal{M}| = 50, \rho,  \theta = |\mathcal{M}| \times \rho)$. We introduced two perturbations according to strategies of smart fraudsters: (1) reduce synchrony by decreasing $\rho$; (2) add camouflage edges obeying $\theta$. In $\mathcal{G}$,  $\theta = |\mathcal{M}| \times \rho$ indicates that the number of a node's camouflage edges is equal to the number of its edges within the fraud group. 
And we used all four types of camouflage as in Sec. \ref{sec:camouflage}.

 Figure \ref{fig:user} and Figure \ref{fig:obj} summarize the performance of detection fraud objects and fraud users with varying $\rho$ respectively, where the X-axises are the synchronization ratio $\rho$ (varying from 0 to 0.5), and the Y-axises are the F1 scores. We have the following observations.
1) Without camouflage and with a high synchronization ratio $\rho$, both \sysname and CatchSync can catch all injected frauds. 
2) At lower $\rho$'s, even without camouflage, the performance of Fraudar decreases significantly, but \sysname maintains its performance.  In fact, even for $\rho = 0.1$, the edge density of the fraud group is only $0.006$, \sysname can still achieve an F1 score of 0.97.  The effects confirm the robustness of our novel approach ($C$ + LPA-TK + $\mathcal{F}$).  
3) Camouflage significantly decreases the performance of CatchSync, but both \sysname and Fraudar are resistant to camouflage.  Not surprisingly, \sysname performs much better when camouflage and loose synchronization exist together.  
4) As shown in Fig.~\ref{fig:user}, without the camouflage and loose synchronization, CatchSync and Fraudar perform perfectly, but their performance degrade quickly when $\rho$ decreases with camouflage.  
5) CrossSpot performs poorly for any $\rho <= 0.5$. 
 
\para{[Injection Scheme 2].} In this experiment, we injected 5 fraud groups $\mathcal{G}(|\mathcal{N}| =200, |\mathcal{M}| = 50, \rho,  \theta = |\mathcal{M}| \times \rho)$ into AmazonOffice, AmazonBaby,  AmazonTools, AmazonFood, AmazonVideo, and AmazonBook, in which $\rho$ was randomly chosen from $[0.2, 0.6]$, respectively. Out of the 5 fraud groups, 1 of them is no camouflage, 4 out of them are augmented with four types of camouflage respectively. The performances are shown in Table \ref{tab:amazon1} and Table \ref{tab:amazon2} with respect to the detection of fraud objects and users. Overall, \sysname is the most robust and accurate across all variations and camouflages. The semi-supervised \sysnamep with a random selection of 5\% labeled fraud users kept or further improve the performance, verifying the conclusion in Section \ref{sec:analysis}.

\vspace{-1em}
\begin{table}[h]
	\caption{Performance(AUC) of detecting fraud objects on the [Amazon] datasets }\label{tab:amazon1}
	\centering
	\begin{tabular}{c|cccc}
		\toprule
		& AmazonOffice & AmazonBaby& AmazonTools \\
		\midrule  
		Fraudar &0.8915 &0.8574 & 0.8764  \\
		 CatchSync & 0.8512 &0.8290& 0.8307  \\
		 CrossSpot &0.8213 & 0.8342 & 0.7923 \\
		\midrule  
		 \sysname & \textbf{0.9987} & \textbf{0.9495} & \textbf{0.9689}\\
		 \sysnamep & \textbf{0.9987} & \textbf{0.9545} & \textbf{0.9675}\\
		\toprule
		&AmazonFood &AmazonVideo & AmazonBook \\
		\midrule  
		Fraudar & 0.6915 & 0.7361 & 0.8923  \\
		 CatchSync & 0.7612 &0.7990& 0.7634  \\
		 CrossSpot & 0.7732 & 0.7854 & 0.8324 \\
		\midrule  
		\sysname & \textbf{0.8458} & \textbf{0.8651} & \textbf{0.9534}\\
		\sysnamep & \textbf{0.8758} & \textbf{0.8951} & \textbf{0.9644}\\
		\bottomrule
	\end{tabular}
	\vspace{-1em}
\end{table}

\vspace{-1em}

\begin{table}[h]
	\caption{Performance(AUC) of detect fraud users on the [Amazon] datasets }\label{tab:amazon2}
	\centering
	\begin{tabular}{c|cccc}
		\toprule
		& AmazonOffice & AmazonBaby& AmazonTools \\
		\midrule  
		Fraudar &0.9015 &0.8673 & 0.8734  \\
		 CatchSync & 0.8732 &0.8391& 0.8304  \\
		 CrossSpot &0.8113 & 0.8422 & 0.7823 \\
		\midrule  
		 \sysname & \textbf{1} & \textbf{0.9795} & \textbf{0.9796}\\
		 \sysnamep & \textbf{1} & \textbf{0.9845} & \textbf{0.9855}\\
		\toprule
		&AmazonFood &AmazonVideo & AmazonBook \\
		\midrule  
		Fraudar & 0.7213 & 0.7451 & 0.8815  \\
		 CatchSync & 0.7234 &0.8243& 0.7763  \\
		 CrossSpot & 0.7653 & 0.7913 & 0.8532 \\
		\midrule  
		\sysname & \textbf{0.8637} & \textbf{0.8843} & \textbf{0.9572}\\
		\sysnamep & \textbf{0.8818} & \textbf{0.9111} & \textbf{0.9579}\\
		\bottomrule
	\end{tabular}
	\vspace{-1em}
\end{table}

\para{[Yelp]}\cite{YELP}. YelpChi, YelpNYC, and YelpZip are three datasets collected by \cite{YELP2} and \cite{YELP}, which contain a different number of reviews for restaurants on Yelp. Each review includes the \emph{user} who made the review and the \emph{restaurants}. Thus the three datasets can be represented by the bipartite graph $G$ formed by (users, restaurants).
The three datasets all include labels indicating whether each review is fake or not.
Detecting fraudulent users has been studied in \cite{YELP} but used review text information.
In this paper, we give the evaluation of catching fraudulent restaurants which bought fake reviews only using the two features. Intuitively, more reviews a restaurant contains, the more suspicious it is. Therefore, we label a restaurant as ``fraudulent'' if the number of fake reviews it receives exceeds 40 (because legit restaurants also may contain a few fake reviews). Table \ref{tab:yelp} shows the results. \sysname and \sysnamep have the best accuracy on all three datasets.

\vspace{-1em}
\begin{table}[h]
	\caption{Performance(AUC) on the [YELP] datasets}\label{tab:yelp}
	\centering
	\begin{tabular}{l|ccc}
		\toprule
		&YelpChi & YelpNYC & YelpZip\\
		\midrule  
		Fraudar& 0.9905  & 0.8531 & 0.7471 \\
		CatchSync&0.9889 &0.8458 & 0.7779 \\
		CrossSpot & 0.9744 &0.7965 &0.7521\\ 
		\midrule  
		\sysname & \textbf{0.9905}& \textbf{0.8613}& \textbf{0.7793} \\
		\sysnamep & \textbf{0.9905}& \textbf{0.8653}& \textbf{0.7953} \\
		\bottomrule
	\end{tabular}
	\vspace{-1em}
\end{table}

\para{[DARPA]} DARPA\cite{DARPA} was collected by the Cyber Systems and Technology Group in 1998. It is a collection of network connections, some of which are TCP attacks. Each connection contains \emph{source IP} and \emph{destination IP}. Thus the dataset can be modeled as a bipartite graph $G$ formed by (source IPs, destination IPs) and we evaluate the performance of detecting malicious source IPs and destination IPs respectively for each method. The dataset includes labels indicating whether each connection is malicious or not and we labeled an IP as `malicious' if it was involved in a malicious connection. 

Table \ref{tab:darpa} presents the corresponding accuracies. Unfortunately, all baselines have bad performance regarding the detection of malicious IPs. However, \sysname and \sysnamep exhibit near-perfect accuracy. 

\vspace{-1em}
\begin{table}[h]
	\caption{Performance(AUC) on the [DARPA] dataset}\label{tab:darpa}
	\centering
	\begin{tabular}{c|cc}
		\toprule
		Detection of & source IP &  destination IP \\
		\midrule  
        Fraudar & 0.7420 & 0.7298 \\
        CatchSync & 0.8069 & 0.8283 \\
        CrossSpot & 0.7249 & 0.6784 \\
		\midrule  
		\sysname & \textbf{0.9968} & \textbf{0.9920}\\
		\sysnamep & \textbf{0.9968} & \textbf{0.9920}\\
		\bottomrule
	\end{tabular}
	\vspace{-1em}
\end{table}

\begin{figure*}[h]
	\includegraphics[width = \textwidth]{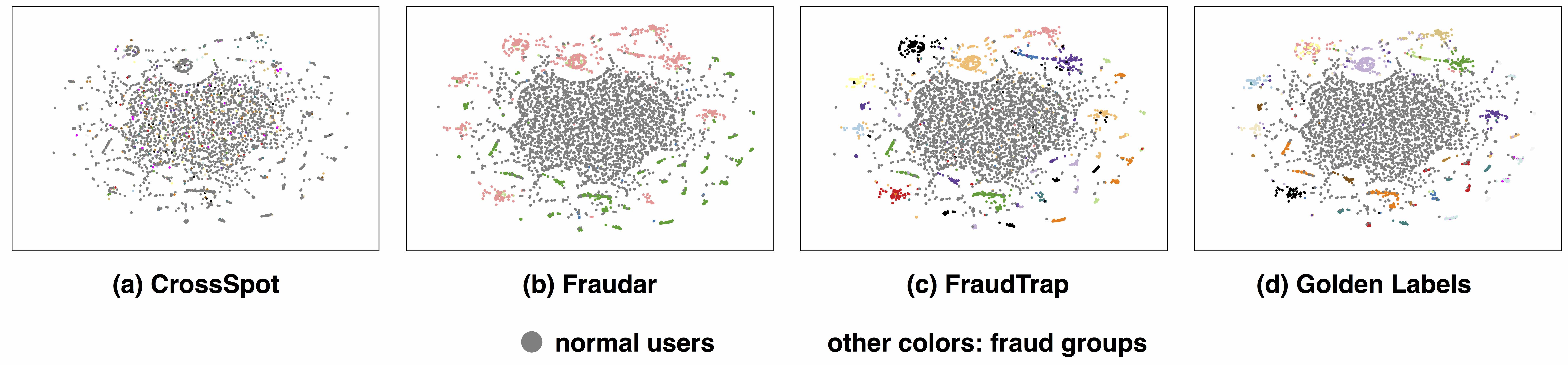}
	\vspace{-2em}
	\caption{The projection of the example dataset using t-SNE on [Registration](See text). } \label{fig:points}
	\vspace{-0.5em}
\end{figure*}

\para{[Registration]} is a real-world user registration dataset with 26k log entries from a large e-commerce website.  Each entry contains \emph{user ID} and two more features, \emph{IP subnet} and \emph{phone-prefix}, and an additional feature \emph{timestamp}.
The dataset includes labels indicating whether each entry (user ID) is fraudulent or not and the labels are obtained by tracking these user accounts for several months, and see if the user conduct fraud after registration and each fraud user has a group ID according to obvious attribute sharing among them. Of all user accounts, 10k are fraudulent. 
Note that the registration records do not contain the user-object interaction.  We can easily adapt it to the \sysname framework assuming that each registered account (``object'')  has many followers identified by a feature value,  IP subnet or phone-prefix (``user'').  Intuitively, an IP subnet can be used in many registrations, and we model it the same as a user follows multiple objects in a social network. Moreover, we use \sysnames to denote the mode of \sysname using the side information \emph{timestamp}, and an edge $\varepsilon = (IP subnet/phone, timestamp)$ in \sysnames.

%using side information in which we add timestamp (day) feature into the edge attribute tuple while building OSG

\begin{table}[h]
	\caption{Performance (AUC) on the [Registration] dataset}\label{tab:f1}
	\centering
	\begin{tabular}{l|c|c}
		\toprule
		  & Feature: IP  & Feature: phone \\
		\midrule  
		 Fraudar   &  0.7543  & 0.8742 \\
		 CatchSync  & 0.7242 & 0.8435  \\
		 CrossSpot   & 0.6976  & 0.8231 \\		
		 \midrule  
		\sysname & \textbf{0.7658}& \textbf{0.8979} \\
		\textsc{\sysname}+&\textbf{0.7826}&\textbf{0.9113}\\
		\sysnames & \textbf{0.7724}&\textbf{0.9215}\\
		\bottomrule
	\end{tabular}
	\vspace{-1em}
	
\end{table}

In our first experiment, we only used the IP subnet feature as the ``user'' side of the bipartite graph. The left half of Table~\ref{tab:f1} summarizes the results.  We have the following observations:
1) \sysname, \sysnamep, and \sysnames outperformed all the other existing methods by a small margin.  Taking a closer look at the detection result, we found that \sysname captured a fraud group of 75 fraud users that all other methods missed.  The group is quite loosely synchronized with the edge density of only 0.14 in the original bipartite graph.  However, having an edge density in 1.0 in OSG makes it highly suspicious in \sysname. 
2) \sysnamep performed better than the unsupervised version, even with only 5\% of the fraud labels.  

In the second experiment, we used phone features as the ``user'' side of the bipartite graph.  The right column of Table~\ref{tab:f1} summarizes the results.  The key observations are: 1) \sysname, \sysnamep, and \sysnames still outperformed other methods. Other baselines have lower performance because they worked poorly on a group with 125 false positives (and 75 true positives).  This is because they are based on edge density on the bipartite graph only, and the groups' edge density is too big enough to distinguish this group from the normal. 

As an additional benefit, \sysname can provide insights on the grouping of fraud users/objects by their similarity.  Fig.~\ref{fig:points} plots a projection of the data onto a 2D space using t-SNE\cite{Van2017Visualizing} on [Registration].  We labeled the fraud groups and normal groups according to the $\mathcal{F}$-score ranking.  We plotted the users in the same group with the same color. We expect that points in the similar groups are clustered together.  We observe that the results from \sysname are much better than Fraudar and CrossSpot, since the users with the same color cluster better, which is very similar to the clustering result of labels.

%\subsubsection{Unsupervised vs. supervised mode.}
%To evaluate the improvement with semi-supervised \sysnamep, we use a random selection of 5\% fraud users as labeled.  Fig.~\ref{fig:obj} and Fig.~\ref{fig:user} plot the performance of \sysnamep.  We observe a significant improvement in the extremely difficult cases: with both low $\rho$ and camouflage.  For example, with $\rho = 0.2$ and random camouflage, \sysnamep achieves a much better performance while none of the unsupervised methods including \sysname work. We believe it is because the known labels help us identify the small group of fraud objects, based on their interactions with known fraud users, even if the interaction frequency is low.  As another example, with $\rho = 0.2$ and hijacked camouflage, \sysname obtains an F1 of 0.44 on object and 0.97 on user detection.  It reflects the effectiveness of our approach (Eq.$(5)$) on detecting fraud users and filtering out false alarms.

\subsection{Scalability}\label{sec:exp_scale}

\para{Sparsity of OSG edges.  }  All the three datasets above have low edge densities.  In fact, we also studied several public datasets by the construction of the OSG and the computation of the edge density.  For example,  three datasets in \cite{amazon_data}  and one dataset in \cite{Leskovec2010Signed} have edge densities of 0.0027, 0.0027, 0.0028, 0.0013 respectively.  With this density, the time and space complexities of \sysname are both near-linear to the number of edges in the graph.
% We implement \sysname on five other real-world datasets to verify the effectiveness and scalability of \sysname: (1) sampled  where the average ,   (2)Amazon-video, (3)Amazon-Android-app, (4)Amazon-food and (5) Epinions . Table \ref{tab:data} shows the dataset details and the maximal $\mathcal{F}$-score of group we detected. 
\begin{figure}[h]
\includegraphics[width = 0.8\columnwidth, height = 0.5\columnwidth]{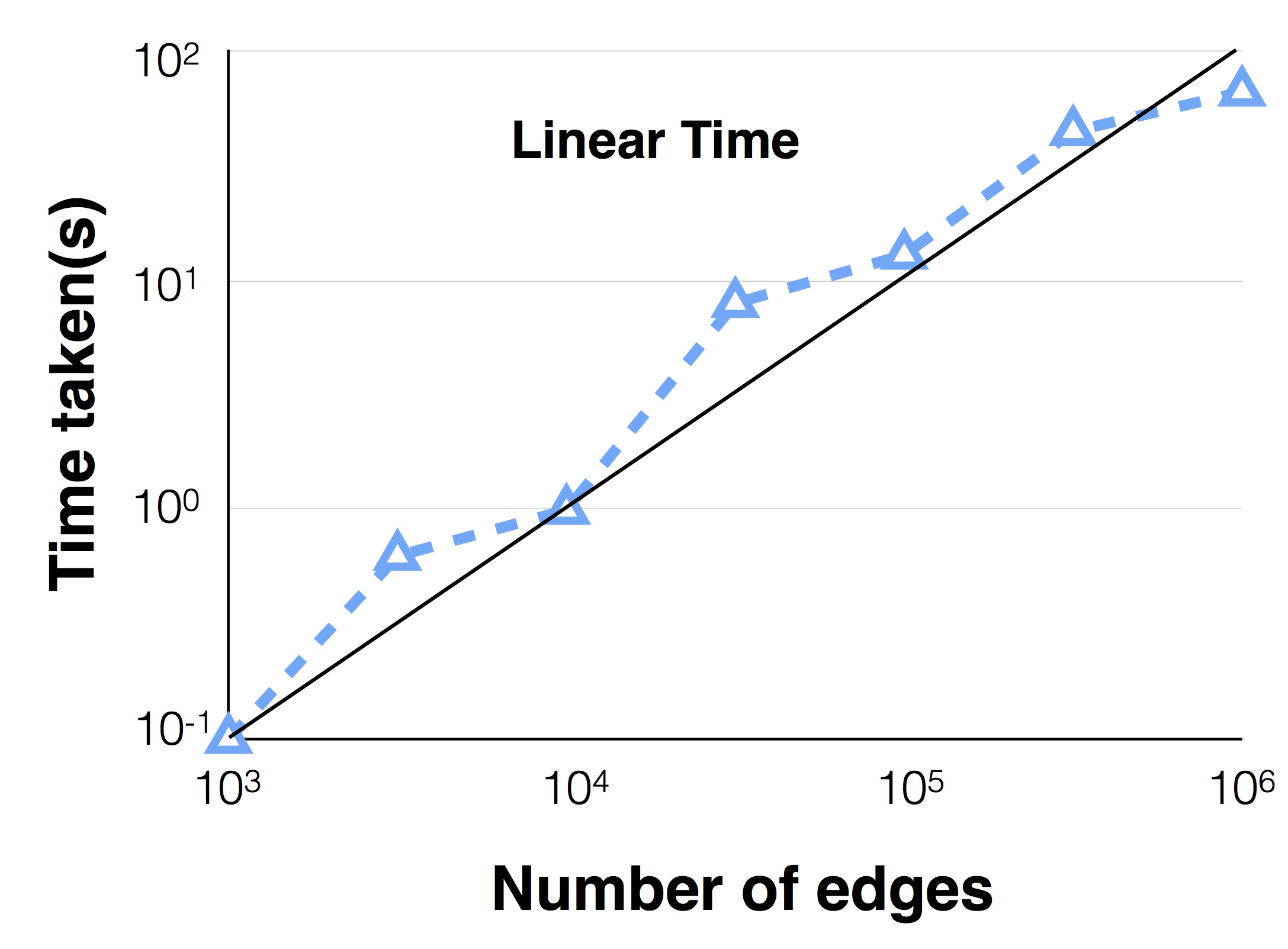}
\vspace{-1em}
\caption{\sysname has the near-linear time complexity: the curve (blue) show the running time of \sysname, compared with a linear function (Black)} \label{fig:time}
\vspace{-1em}
\end{figure}

Based on the dataset AmazonFood, we vary the number of edges using downsampling, and verify the running time of \sysname is indeed near-linear, as shown in Fig.~\ref{fig:time}.

\section{Conclusion}

Fraudsters can adapt their behavior to avoid detection.  Specifically, they can reduce their synchronized behaviors and conduct camouflage to destroy the performance of state-of-the-art methods in the literature.  
%The reason is that the graph features on the original user-object bipartite graph are not high-level enough to capture the real semantics of group frauds, i.e., lots of similar (fraud) users interacting with many similar (fraud) objects. 
To solve the challenges,
we propose \sysname to capture the more fundamental similarity among fraud objects, and work on the edge density on the Object Similarity Graph (OSG) instead.  We design \sysname with many practical considerations for the general fraud detection scenario in many applications, such as supporting a mixture of unsupervised and semi-supervised learning modes, as well as multiple features.  We believe the metrics of \sysname are much harder for fraudsters to manipulate.

%As future work, we will explore more on graph embedding methods to better capture the similarities of objects and users, especially the similarity over a longer period of time.  We will also expand the approach to more scenarios, such as social community detection and interest management applications.  
%

\bibliographystyle{ACM-Reference-Format}
\bibliography{Ref}
\end{document}